\crefname{hypothesis}{Hypothesis}{Hypotheses}
\title{
Model Reduction for Nonlinear Systems by Balanced Truncation of State and Gradient Covariance
\thanks{Compiled \today.
\funding{This research was supported by the Army Research Office under grant
  number W911NF-17-1-0512 and the Air Force Office of Scientific Research under
  grant number FA9550-19-1-0005.  S.E.O. was supported by the National Science Foundation Graduate Research Fellowship Program under Grant No. DGE-2039656.}}}
\author{
Samuel E. Otto\thanks{Princeton University, Princeton, NJ, Dept. of Mechanical and Aerospace Engineering (S.E.O.: \email{sotto@princeton.edu}, A.P.: \email{apadovan@princeton.edu}, C.W.R.: \email{cwrowley@princeton.edu})}
\and Alberto Padovan\footnotemark[2]
\and Clarence W. Rowley\footnotemark[2]
}
\DeclareMathOperator{\avg}{avg}
\DeclareMathOperator{\E}{\mathbb{E}}
\DeclareMathOperator{\Tr}{Tr}
\DeclareMathOperator{\rank}{rank}
\DeclareMathOperator{\diag}{diag}
\DeclareMathOperator{\Range}{Range}
\DeclareMathOperator{\Null}{Null}
\DeclareMathOperator{\col}{col}
\DeclareMathOperator{\row}{row}
\DeclareMathOperator{\grad}{\nabla}
\DeclareMathOperator{\D}{\mathrm{D}}
\newcommand{\scolon}{\, {:} \,}
\newcommand{\R}{\mathbb{R}}
\newcommand{\calG}{\mathcal{G}}
\newcommand{\calH}{\mathcal{H}}
\newcommand{\calU}{\mathcal{U}}
\newcommand{\calV}{\mathcal{V}}
\newcommand{\calW}{\mathcal{W}}
\newcommand{\calX}{\mathcal{X}}
\renewcommand*\env@matrix[1][*\c@MaxMatrixCols c]{%
  \hskip -\arraycolsep
  \let\@ifnextchar\new@ifnextchar
  \array{#1}}
\pgfplotsset{compat=1.17}
\begin{document}

\maketitle

\begin{abstract}
Data-driven reduced-order models often fail to make accurate forecasts of high-dimensional nonlinear dynamical systems that are sensitive along coordinates with low-variance because such coordinates are often truncated, e.g., by proper orthogonal decomposition, kernel principal component analysis, and autoencoders.
Such systems are encountered frequently in shear-dominated fluid flows where non-normality plays a significant role in the growth of disturbances.
In order to address these issues, we employ ideas from active subspaces to find low-dimensional systems of coordinates for model reduction that balance adjoint-based information about the system's sensitivity with the variance of states along trajectories.
The resulting method, which we refer to as covariance balancing reduction using adjoint snapshots (CoBRAS), is analogous to balanced truncation with state and adjoint-based gradient covariance matrices replacing the system Gramians and obeying the same key transformation laws.
Here, the extracted coordinates are associated with an oblique projection that can be used to construct Petrov-Galerkin reduced-order models.
We provide an efficient snapshot-based computational method analogous to balanced proper orthogonal decomposition.
This also leads to the observation that the reduced coordinates can be computed relying on inner products of state and gradient samples alone, allowing
us to find rich nonlinear coordinates by replacing the inner product with a kernel function.
In these coordinates, reduced-order models can be learned using regression.
We demonstrate these techniques and compare to a variety of other methods on a simple, yet challenging three-dimensional system and a nonlinear axisymmetric jet flow simulation with $10^5$ state variables.
\end{abstract}


\begin{keywords}
data-driven modeling, active subspaces, balanced truncation, kernel method, adjoint method, method of snapshots, non-normal systems, oblique projection, Grassmann manifold
\end{keywords}

\begin{AMS}
    14M15, 
    15A03, 
    15A24, 
    15A42, 
    34A45, 
    34C20, 
    47B32, 
    57R35, 
    57R40, 
    76D55, 
    90C06, 
    93A15, 
    93C10  
\end{AMS}

\section{Introduction}
\label{sec:introduction}

Models that describe the time-evolution of physical systems such as fluid flows are key to forecasting, tracking, and controlling their behavior.
While the governing equations for these systems, e.g., the Navier-Stokes equations, can often be simulated, these simulations are too costly to be used in real-time applications.
Instead, a simplified reduced-order model (ROM) can be used to describe the most important aspects of the original system's behavior.
Methods for constructing ROMs from the original full-order model (FOM) entail finding a suitable projection that reduces the dimension of the state space without neglecting important information.
Many different approaches are possible depending on what state information is deemed important.
For reviews of modern methods see \cite{Ghadami2022data, Rowley2017model, Benner2015survey}.

The simplest and by far the most common approach is to say that a ``feature'' or state-space coordinate contains important information about the system if it explains a large amount of variance among the states generated along typical trajectories.
The optimal linear features in this respect are given by principal component analysis (PCA) also known in some communities as proper orthogonal decomposition (POD).
The first application of POD to model reduction of fluid flows was by Lumley \cite{Lumley1967structure}, with later work by Sirovich \cite{Sirovich1987turbulence} and Holmes et al. \cite{Holmes2012turbulence}.
As noted by Sirovich \cite{Sirovich1987turbulence}, the PCA/POD basis vectors or ``modes'' can be computed using the ``method of snapshots'' without assembling the state covariance matrix.
The resulting coordinates of a state vector projected onto the leading modes can be found using only inner products between the given state vector and the original data.
Sch{\"o}lkopf \cite{Scholkopf1998nonlinear} used this observation to develop kernel PCA (KPCA), which extracts rich nonlinear features by applying PCA/POD after lifting data into a higher-dimensional (possibly infinite-dimensional) reproducing kernel Hilbert space (RKHS).
Nonlinear features that are optimal for reconstructing the system's states can also be found using autoencoder neural networks and employed for model reduction as described by Lee and Carlberg \cite{Lee2020model}.

Non-normality in shear-dominated fluid flows causes the evolution of these systems to be heavily influenced by certain low-energy (low-variance) disturbances \cite{Trefethen-93, Schmid2001stability}.
Therefore, the features (coordinates) we choose to construct a ROM must account for both the variance of the system's states as well as the sensitivity of the system's dynamics.
In the context of a linear time-invariant input-output system these quantities are captured by the controllability and observability Gramians.
Thanks to their transformation properties, these Gramians may be simultaneously diagonalized by a change of coordinates, leading to the balanced truncation (BT) algorithm introduced by Moore \cite{Moore1981principal}.
Rowley \cite{Rowley2005model} discovered that the oblique projection and features extracted by BT can be approximated using a method of snapshots called balanced POD (BPOD) involving data obtained by impulse response simulations of the system and its adjoint. 
This enabled BT to be successfully applied to enormous systems including linearized fluid simulations where it would be computationally impractical to compute the Gramians \cite{Barbagallo2009closed, Ahuja2010feedback,
  Ilak2010model, Illingworth2011feedback}.
Analogous approaches can be applied to linear time-periodic input-output systems, where reduced-order models are obtained by balancing the time-periodic Gramians \cite{Sandberg2004balanced,Ma2010snapshot,MaThesis,Padovan2022bt}.

Moving away from linear systems, we can find several generalizations of balanced truncation for quadratic bilinear (QB) systems \cite{Benner2017balanced}, as well as to nonlinear systems \cite{Scherpen1993balancing, Verriest2004nonlinear, Lall2002subspace}.
However, there is room for improvement on these methods.
As we demonstrate in \cite{Otto2022optimizing}, QB balancing (QB BT) as well as the QB iterative rational Krylov algorithm (QB-IRKA) \cite{Benner2018H2} lose accuracy away from the stable equilibrium about which they are based in addition to being extremely difficult to compute for large-scale systems, e.g., with $10^5$ or more states.
The nonlinear balancing methods introduced in \cite{Scherpen1993balancing, Verriest2004nonlinear} are computationally challenging owing to the need to solve PDEs over the state space.
Progress has been made on this issue using Taylor series approximation \cite{Fujimoto2008computation, Kramer2022nlBT1, Kramer2023nlBT2}, but the computations remain daunting for systems with more than $10^3$ states.
The method introduced in \cite{Lall2002subspace} relies on empirical Gramians obtained via nonlinear impulse response simulations.
However, the choices for the impulses are ambiguous \cite{Ilak2010model, Ilak2009model}, and the number of impulses required scales linearly with the state dimension \cite{Kawano2021empirical}.
This scaling with the state dimension is unavoidable since the range of the empirical observability Gramian, measuring the system's sensitivity, lies in the span of the sampled initial conditions.
A generalization of the empirical Gramians framework is developed in \cite{Bouvrie2010balanced, Bouvrie2017kernel}, where state and output trajectory data are lifted into a RKHS, yielding a nonlinear dimension reduction map that can be computed using the kernel function.
This approach still relies on sampled trajectories for each state dimension of the FOM.

Sampling a function's gradient and applying POD to these samples reveals an ``active subspace'' \cite{Constantine2015active} in which most of the function's variation is captured.
Because the number of samples can be much smaller than the dimension of the function's domain, active subspaces have become a valuable tool for parameter studies and design optimization.
The generalization introduced by Zahm et al. \cite{Zahm2020gradient} identifies a projection that balances the sensitivity of an arbitrary function against the variance of an underlying probability distribution of points in the domain.
In the present work, we apply this approach to reduce the dimension of dynamical systems by considering the function that maps initial states to sequences of outputs.
We identify a projection that balances the sensitivity of this function with the variance of state space data collected from the system along typical trajectories.
The gradient samples are computed using an adjoint sensitivity method applied to random projections of output sequences from the system.
The resulting method, which we call covariance balancing reduction using adjoint snapshots (CoBRAS) is demonstrated on several challenging problems.

Our key technical observation is that the method introduced by Zahm et al. \cite{Zahm2020gradient} is identical to BT with the Gramians replaced by state and gradient covariance matrices sharing the same transformation properties.
Immediately this yields a new method of snapshots resembling BPOD that allows the projection to be computed for high-dimensional systems where assembling the covariance matrices is prohibitive.
This projection is optimal with respect to reconstruction of the state and gradient data in the sense described by Singler \cite{Singler2010optimality, Singler2015optimality}.
Moreover, the features extracted by the projection depend only on inner products among state and gradient vectors.
This allows us to replace the inner product with a kernel function in order to identify rich nonlinear features in a higher-dimensional RKHS, as in \cite{Scholkopf1998nonlinear}.
While related work by \cite{Romor2020kernel} introduces the idea of lifting into a high-dimensional space using a nonlinear feature map, the underling reliance on inner products enabling implicit computations via a kernel is not recognized.
Another related approach \cite{Bigoni2021nonlinear} optimizes nonlinear features based on gradient alignment.
In contrast, our approach does not require optimization and relies only on the singular value decomposition (SVD) of a kernel matrix whose size is independent of the state dimension.
Our approach also differs from the kernel-based generalization of the empirical Gramians framework described in \cite{Bouvrie2010balanced, Bouvrie2017kernel} since we rely on gradient samples obtained from the system's linearized adjoint and an appropriate lifting of these gradients into the RKHS in order to avoid the costly sampling of a trajectory for each state variable of the FOM.

We demonstrate the proposed method on a challenging three-dimensional nonlinear toy model, and on an incompressible axisymmetric jet flow simulation at Reynolds number $Re = 2000$. 
We will see that, not only does CoBRAS perform well as a standalone model reduction technique, but it can also be used to provide a good initial guess for the recently introduced trajectory-based optimization for oblique projections (TrOOP) framework \cite{Otto2022optimizing}.

\section{Balancing variance and sensitivity}

Consider a discrete-time dynamical system referred to as the full-order model (FOM)
\begin{equation}
\begin{aligned}
    x(t+1) &= f(x(t), u(t)), \qquad x(0) = x_0 \\
    y(t) &= g(x(t), u(t)),
\end{aligned}
\label{eqn:FOM}
\end{equation}
with state vector $x\in\R^n$, input $u\in\R^{q_0}$, and output $y \in \R^{m_0}$.
We seek to approximate the output of this system over a range of initial conditions and input signals using a reduced-order model (ROM)
\begin{equation}
\begin{aligned}
    z(t+1) &= \tilde{f}(z(t), u(t)), \qquad z(0) = h(x_0) \\
    \hat{y}(t) &= \tilde{g}(z(t), u(t)),
\end{aligned}
\label{eqn:ROM}
\end{equation}
whose state $z$ evolves in a lower-dimensional space $\R^r$ with $r < n$.
The ROM may be constructed by first selecting a map $h:\R^n \to \R^r$ and then employing one of several possible methods to find the approximate dynamics $\tilde{f}$ and observation map $\tilde{g}$ that govern the encoded state variables $z = h(x)$.
These methods include (nonlinear) Galerkin methods as well as regression-based approaches and recurrent neural networks.
Regardless of the approximation technique used for $\tilde{f}$ and $\tilde{g}$, the quality of the ROM depends heavily on the features selected by $h$.
Our goal will be to choose the map $h$ using simple data-driven methods that enable construction of accurate reduced-order models.
The map we identify will be associated with a projection operator that can be used to construct Petrov-Galerkin ROMs.

The features extracted by the encoding map $h$ must contain the information we need to forecast the future outputs of the system.
In this paper, we formalize this notion as follows. 
Evolving the system over a time-horizon $L$ with a sequence of inputs $u(0 \scolon L-1) = \big(u(0), \ldots, u(L-1)\big) \in \R^q$ produces outputs $y(0\scolon L) = \big( y(0), y(1), \ldots, y(L) \big) \in \R^m$, giving rise to a map
\begin{equation}
    F = ( F_0, \ldots, F_{L} ): \big(x(0),\ u(0\scolon L-1)\big) \mapsto y(0\scolon L),
    \label{eqn:input_output_map}
\end{equation}
with components $F_{\tau}: \big(x(0),\ u(0\scolon \tau-1)\big) \mapsto y(\tau)$.
In the present context, we seek an $h$ that allows for close approximation of $F\big(x(t),\ u(t\scolon t+L-1)\big)$ by a function $\tilde{F}\big(h(x(t)),\ u(t\scolon t+L-1)\big)$ along trajectories of the system.

Specifically, we consider approximating $F$ over a probability distribution of states $x$ and input sequences $\bar{u}$ constructed as follows.
We begin with a distribution over initial conditions $x_0$ and input sequences $u(0\scolon N+L-1)$ determined by the problem of interest.
From the resulting trajectories of \cref{eqn:FOM}, we draw $t$ uniformly at random from $\{0, \ldots, N \}$ to construct $x = x(t)$ and $\bar{u} = u(t\scolon t+L-1)$.

\subsection{Extracting linear features}
\label{subsec:linear_features}
We first consider the case where $h(x) = \Psi^T x$ is a linear map providing coordinates for the range of a linear projection $P = \Phi \Psi^T:\R^n \to \R^n$.
The projection uniquely decomposes any $x \in \R^n$ into $x = x_1 + x_2$, where $x_1 \in \Range(P)$ and $x_2 \in \Null(P) = \Null(h)$.
Intuitively speaking, a good projection for approximating $F$ will be one where the variance of $x_2$ is small and the sensitivity of $F$ to variation of $x_2$ is also small.
The problem of selecting a suitable projection based on these criteria for a general vector valued function $F$ is studied by Zahm et al. \cite{Zahm2020gradient}.
In this approach, one starts with a probability distribution over states and quantifies the variation of the state $x$ using the covariance matrix
\begin{equation}
    W_x = \E[x x^T].
    \label{eqn:state_covariance}
\end{equation}
Here, we apply this idea to a dynamical system \cref{eqn:FOM} by considering the probability distribution constructed above for states and input sequences along trajectories.
\begin{remark}
We center \cref{eqn:state_covariance} about the origin rather than the mean as one may wish to center about an arbitrary point in state space via a coordinate shift; for example, it is common to center about an equilibrium of the system \cref{eqn:FOM}.
\end{remark}

Following \cite{Zahm2020gradient}, the sensitivity of $F$ may be quantified using the gradient covariance matrix
\begin{equation}
    W_g = \E\left[ \grad_x F(x, \bar{u}) \grad_x F(x, \bar{u})^T  \right],
    \label{eqn:gradient_covariance}
\end{equation}
where $\grad_x F(x, \bar{u}) = \D_x F(x, \bar{u})^T$.
The optimal approximating function in the mean-square sense is given by the conditional expectation
\begin{equation}
    \hat{F}(P x, \bar{u}) = \E\left[ F(x, \bar{u}) \ \big\vert \ P x \right],
    \label{eqn:conditional_expectation_approximator}
\end{equation}
which averages $F(x_1 + x_2, \bar{u})$ over $x_2$, where $x = x_1 + x_2$ is the decomposition induced by $P$.
To understand the relationship between the covariance matrices and the approximation accuracy of \cref{eqn:conditional_expectation_approximator}, we consider the case where the conditional distribution of $x$ given $\bar{u}$ is Gaussian.
In this setting, \cref{thm:projection_error_for_Gaussian} provides an explicit bound.
\begin{theorem}
\label{thm:projection_error_for_Gaussian}
Let $(x, \bar{u})$ have distribution $\mu$ on $\R^n\times\R^q$ such that the conditional distribution of $x$ given $\bar{u}$ is almost surely Gaussian with positive-definite covariance $\Sigma_{x\vert\bar{u}} = \E\big[ \big( x - \E[x\vert\bar{u}] \big) \big( x - \E[x\vert\bar{u}] \big)^T \ \big\vert \ \bar{u} \big]$.
Let $\Sigma_x = \E\big[ \big( x - \E[x] \big) \big( x - \E[x] \big)^T \big]$ denote the marginal covariance.
Suppose that there is a constant $C \geq 0$ so that 
\begin{equation}
    C \Sigma_x - \Sigma_{x\vert\bar{u}} \succeq 0
\end{equation}
is positive semi-definite almost surely.
This holds with $C = 1$ when $(x, \bar{u})$ are jointly Gaussian.
Let $F \in L^2(\R^n\times\R^q, \mu; \R^m)$ have continuous partial derivatives and let $P:\R^n \to \R^n$ be a rank-$r$ linear projection.
Then the mean square approximation error of \cref{eqn:conditional_expectation_approximator} is bounded by
\begin{equation}
    \E\left[ \big\Vert F(x, \bar{u}) - \hat{F}(P x, \bar{u}) \big\Vert^2 \right] \leq C \Tr\Big[ W_g \big(I - P\big) W_x \big(I-P\big)^T \Big],
    \label{eqn:projection_error_for_Gaussian}
\end{equation}
where $W_x$ and $W_g$ are given by \cref{eqn:state_covariance} and \cref{eqn:gradient_covariance}.
\end{theorem}
\begin{proof}
The core of this result is Proposition~2.5 in Zahm et al. \cite{Zahm2020gradient}, which in turn is derived from the Gaussian Poincar\'{e} inequality in Chen \cite{Chen1982inequality}.
Since the rest of the proof is not especially instructive, we give it in \cref{app:projection_error_for_Gaussian}.
\end{proof}
Similar bounds likely hold for much larger classes of non-Gaussian distributions due to the results in \cite{Parente2020generalized, Zahm2022certified}.
We also note that the upper bound on the right-hand-side of \cref{eqn:projection_error_for_Gaussian} can be re-written as
\begin{equation}
    \Tr\Big[ W_g \big(I - P\big) W_x \big(I-P\big)^T \Big] 
    = \E\Big[ \big\Vert W_g^{1/2} (x - P x) \big\Vert^2 \Big].
\end{equation}
This can be interpreted as the mean square gradient-weighted difference between states and their projections.

In \cite{Zahm2020gradient}, a projection operator that minimizes the upper bound \cref{eqn:projection_error_for_Gaussian} is found by solving a generalized eigenvalue problem $W_g V = W_x^{-1} V \Lambda$ and constructing $P = V_r V_r^T W_x^{-1}$ where the columns of $V_r$ are the $r$ eigenvectors with largest eigenvalues (see Proposition~2.6 in \cite{Zahm2020gradient}).
The covariance matrices are approximated via Monte-Carlo sampling.
However, as the dimension $n$ becomes very large, it becomes computationally impractical to assemble the covariance matrices and solve the eigenvalue problem.
Moreover, as the output dimension $m$ becomes large, it becomes impractical to compute $\grad_x F$.

We address these issues using the observation that the projection found in \cite{Zahm2020gradient} is identical to the one found by applying balanced truncation (BT) \cite{Moore1981principal} to the state and gradient covariance matrices instead of the controllability and observability Gramians of a linear system.
The BT algorithm has been extensively studied and admits computationally efficient snapshot-based approximations in the case of large $n$ and $m$ using the balanced POD algorithm (BPOD) introduced by Rowley \cite{Rowley2005model}.
The key observation is \cref{thm:covariance_balancing}, which yields a factorized optimal projection using singular value decomposition (SVD) and factors of the covariance matrices.
This result is closely related to the optimality of BPOD for data reconstruction shown by Singler in \cite{Singler2010optimality}.
\begin{theorem}[Factorized covariance balancing]
\label{thm:covariance_balancing}
Let $W_x = X X^T$ and $W_g = Y Y^T$ and form the SVD
\begin{equation}
    Y^T X = U \Sigma V^T, \qquad \Sigma = \diag(\sigma_1, \ldots, \sigma_n),
    \label{eqn:balancing_SVD}
\end{equation}
with its rank-$r$ truncation denoted by $U_r \Sigma_r V_r^T$.
If $\sigma_r > 0$, then the minimum
\begin{equation}
    \min_{\substack{P \in \R^{n\times n} : \\ P^2 = P, \ \rank (P) = r }} \Tr\Big[ W_g \big(I - P\big) W_x \big(I-P\big)^T \Big] = \sigma_{r+1}^2 + \cdots + \sigma_n^2
\end{equation}
is achieved by
\begin{equation}
\boxed{
    P = \Phi \Psi^T, \quad \mbox{where} \quad
    \Phi = X V_r \Sigma_r^{-1/2} \quad \mbox{and} \quad \Psi = Y U_r \Sigma_r^{-1/2}.
    }
    \label{eqn:balancing_projection}
\end{equation}
\end{theorem}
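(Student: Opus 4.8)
The plan is to reduce the trace objective to a Frobenius-norm low-rank approximation problem and then invoke the Eckart--Young--Mirsky theorem on optimal low-rank approximation. First I would use the factorizations $W_x = XX^T$ and $W_g = YY^T$ together with the cyclic invariance of the trace. Writing $Q = I-P$ and regrouping factors,
\begin{equation}
\Tr\Big[ W_x (I-P)^T W_g (I-P) \Big] = \Tr\big[ X^T Q^T Y Y^T Q X \big] = \big\Vert Y^T (I-P) X \big\Vert_F^2 .
\end{equation}
Expanding $Y^T(I-P)X = Y^T X - Y^T P X$ and setting $M := Y^T X = U\Sigma V^T$, the problem becomes minimizing $\Vert M - Y^T P X \Vert_F^2$ over rank-$r$ projections $P$.

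The key structural observation for the lower bound is that $Y^T P X$ has rank at most $r$ whenever $\rank(P) = r$. Writing any rank-$r$ projection as $P = \Phi\Psi^T$ with $\Phi,\Psi \in \R^{n\times r}$ and $\Psi^T\Phi = I_r$, we have $Y^T P X = (Y^T\Phi)(\Psi^T X)$, a product factoring through an $r$-dimensional space. Hence Eckart--Young--Mirsky gives $\Vert M - Y^T P X\Vert_F^2 \ge \sigma_{r+1}^2 + \cdots + \sigma_n^2$ for every admissible $P$, with the unconstrained optimum attained by the truncated SVD $U_r\Sigma_r V_r^T$.

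It then remains to verify that the proposed $P = \Phi\Psi^T$ is a genuine rank-$r$ projection attaining this bound. Using the orthonormality relations $U_r^T U = [I_r\ 0]$ and $V^T V_r = [I_r\ 0]^T$, a short computation gives $\Psi^T \Phi = \Sigma_r^{-1/2} U_r^T (Y^T X) V_r \Sigma_r^{-1/2} = \Sigma_r^{-1/2}\Sigma_r\Sigma_r^{-1/2} = I_r$, where $\sigma_r > 0$ guarantees that $\Sigma_r^{-1/2}$ is well defined. This yields $P^2 = P$, and since $\Tr(P) = \Tr(\Psi^T\Phi) = r$ for an idempotent matrix we also get $\rank(P) = r$. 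The same identities give $Y^T\Phi = U_r\Sigma_r^{1/2}$ and $\Psi^T X = \Sigma_r^{1/2} V_r^T$, so $Y^T P X = U_r\Sigma_r V_r^T$ is exactly the Eckart--Young optimum and the bound is met with equality.

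The step I expect to require the most care is confirming that restricting to idempotent $P$ (rather than to arbitrary rank-$r$ matrices) does not raise the minimum above the Eckart--Young value; this is settled precisely by the explicit projection above, which simultaneously satisfies $P^2 = P$ and reproduces the truncated SVD of $M$. The nondegeneracy assumption $\sigma_r > 0$ enters only to make the balancing factors $\Sigma_r^{-1/2}$ well defined and to ensure that $\rank(P)$ equals $r$ exactly.
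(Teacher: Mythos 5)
Your proposal is correct and follows essentially the same route as the paper's proof: rewrite the trace via cyclic invariance as $\big\Vert Y^T X - Y^T P X \big\Vert_F^2$, bound it below by $\sigma_{r+1}^2 + \cdots + \sigma_n^2$ using the rank bound $\rank(Y^T P X) \le r$ and Eckart--Young--Mirsky, and verify attainment by the stated $P$. The only difference is that you spell out the verification (the identities $\Psi^T\Phi = I_r$, $Y^T\Phi = U_r\Sigma_r^{1/2}$, $\Psi^T X = \Sigma_r^{1/2}V_r^T$) which the paper compresses into ``direct substitution,'' and this is done correctly.
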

\begin{proof}
Thanks to the permutation identity for the trace and the definition of the Frobenius norm, we have
\begin{equation}
    \Tr\Big[ W_g \big(I - P\big) W_x \big(I-P\big)^T \Big]
    = \big\Vert Y^T X - Y^T P X \big\Vert_F^2.
\end{equation}
The matrix $Y^T P X $ has rank at most $r$, and so the Eckart–Young–Mirsky theorem \cite{Eckart1936approximation, Mirsky1960symmetric} gives
\begin{equation}
    \big\Vert Y^T X - Y^T P X \big\Vert_F^2
    \geq \big\Vert Y^T X - U_r \Sigma_r V_r^T \big\Vert_F^2 = \sigma_{r+1}^2 + \cdots + \sigma_n^2.
\end{equation}
Direct substitution shows that equality is obtained with $P$ defined by \cref{eqn:balancing_projection}.
\end{proof}

To clarify the connection with balanced truncation, we observe that the state and gradient covariance matrices obey the same transformation laws as the controllability and observability Gramians under linear changes of coordinates $x = T \tilde{x}$.
The function to be approximated becomes $F(x) = F(T \tilde{x}) =: \tilde{F}(\tilde{x})$ and its derivative transforms according to $\D F(x) = \D \tilde{F}(\tilde{x}) T^{-1}$.
Thus, the state and gradient covariance matrices transform according to
\begin{equation}
    W_x = T \tilde{W}_x T^T \qquad \mbox{and} \qquad
    W_g = T^{-T} \tilde{W}_g T^{-1},
    \label{eqn:covariance_transformation_properties}
\end{equation}
where $W_g$ is computed using $F$ and $\tilde{W}_g$ is compute using $\tilde{F}$.
Thanks to this transformation law, when $W_x$ and $W_g$ are positive definite it is possible to find a ``balancing'' transformation $T = X V \Sigma^{-1/2}$, $T^{-1} = \Sigma^{-1/2} U^T Y^T$ so that $\tilde{W}_x = \tilde{W}_g = \Sigma$ are equal and diagonal \cite{Laub1987computation, Moore1981principal}.
It is easy to see that the projector described by \cref{eqn:balancing_projection} corresponds to the truncation operator
\begin{equation}
    T^{-1} P T = 
    \begin{bmatrix}
    I_r & 0 \\
    0 & 0
    \end{bmatrix}
\end{equation}
in the coordinate system where the covariance matrices are balanced.
While a balancing transformation requires the covariance matrices to be positive definite, the optimal projection described by \cref{thm:covariance_balancing} does not.
It may be computed using low-rank factors $X$ and $Y$.

Following a similar construction to BPOD \cite{Rowley2005model}, factors $X$ and $Y$ of the sample-based covariance matrices can be formed using the Monte-Carlo samples of the state and gradient.
Letting $\{x_1, \ldots, x_{s_x}\}$ be samples of the state, then the sample-based state covariance can be factored using
\begin{equation}
    X = \frac{1}{\sqrt{s_x}} \begin{bmatrix}
    x_1 & \cdots & x_{s_x}
    \end{bmatrix}.
    \label{eqn:state_sample_matrix}
\end{equation}
In order to reduce the computational burden associated with sampling the gradient, we use an independent, zero-mean, isotropic random vector $\xi\in\R^m$, i.e., $\E[\xi\xi^T] = I$ to define the univariate gradient
\begin{equation}
    g = \grad_x(\xi^T F)(x, \bar{u})
    \label{eqn:g-def}
\end{equation}
satisfying $\E[g g^T] = W_g$.
We can then factor the sample-based gradient covariance matrix using samples of the univariate gradient arranged into the columns of
\begin{equation}
    Y = \frac{1}{\sqrt{s_g}} \begin{bmatrix}
    g_1 & \cdots & g_{s_g}
    \end{bmatrix},
    \label{eqn:gradient_sample_matrix}
\end{equation}
where the columns $g_i$ are drawn using the joint distribution of $\xi$ and $x$
(which are independent).
Using these factors and \cref{thm:covariance_balancing}, we can compute the desired linear features using
\begin{equation}
\boxed{
    z = h(x) = \Psi^T x = \Sigma_r^{-1/2} U_r^T Y^T x.
    }
    \label{eqn:linear_feature_map}
\end{equation}
This computation does not use the $n\times n$ covariance matrices, and may be performed using inner products between sampled state and gradient vectors in $\R^n$.
In particular, assembling $Y^T X$ and computing its SVD has time complexity $\mathcal{O}(s_x s_g n + s_x s_g \min\{ s_x, s_g \} )$.
Once this is done, evaluating \cref{eqn:linear_feature_map} has time complexity $\mathcal{O}(s_g n + s_g r)$.

\subsection{The need for gradient sampling}
In this section we discuss why it is important to sample the gradient of $F$ rather than relying solely on samples of its values $F(x_i)$ at points $x_i\in\R^n$, $i=1, \ldots,s$.
Consider the simplest case when $F$ is a linear map with rank $r$.
The essential issue is that, if one relies solely on the sampled values $F(x_i)$, then generically, one cannot learn about the $r$-dimensional subspace on which $F$ is the most sensitive (i.e., the range of $F^T$) without sampling in all $n$ directions.
However, the subspace $\Null(F)^{\perp} = \Range(F^T) = \Range(W_g)$ is quickly spanned using $r$ samples of the gradient $\grad (\xi_i^T F)(x_i) = F^T \xi_i$ for almost every $\xi_1, \ldots, \xi_r$ (with respect to Lebesgue measure).
This well-known result is an immediate consequence of the following lemma.
\begin{lemma}
\label{lem:generic_sampling}
Let $M \in \R^{m \times n}$ be a matrix with $\rank(M) = r$.
If $r \leq s \leq n$, then almost every matrix $X \in \R^{n\times s}$ (with respect to Lebesgue measure) has linearly independent columns and satisfies $\rank(M X) = r$.
\end{lemma}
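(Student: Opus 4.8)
The plan is to confine both failure events---the columns of $X$ being linearly dependent, and $\rank(MX) < r$---to the zero set of a single polynomial in the $ns$ entries of $X$, and then to invoke the standard fact that the zero set of a polynomial that does not vanish identically has Lebesgue measure zero.

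First I would record the elementary inequality $\rank(MX) \le \min\{\rank(M), \rank(X)\} \le r$, so that only the reverse bound $\rank(MX) \ge r$ need be established generically; likewise, since $s \le n$, full column rank of $X$ is possible. Both rank conditions are detected by minors: the columns of $X$ are linearly independent exactly when some $s \times s$ minor of $X$ is nonzero, and $\rank(MX) \ge r$ exactly when some $r \times r$ minor of the $m \times s$ matrix $MX$ is nonzero. Each such minor is a polynomial in the entries of $X$ (in the second case with coefficients assembled from $M$), so the set on which a given rank condition fails is a common zero set of finitely many polynomials, hence contained in the zero set of any single one of them.

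The crux is to exhibit one point $X_\star \in \R^{n \times s}$ at which both conditions hold simultaneously, certifying that the chosen minors are not the zero polynomial. Using $\rank(M) = r$, I would pick $v_1, \dots, v_r \in \R^n$ whose images $Mv_1, \dots, Mv_r$ form a basis of $\Range(M)$; these $v_i$ are then linearly independent, and since $r \le s \le n$ I can extend them to linearly independent columns $v_1, \dots, v_s$ of $X_\star$. The first $r$ columns of $MX_\star$ are independent, giving $\rank(MX_\star) = r$, while $X_\star$ has full column rank by construction. Fixing an $s \times s$ minor $p_1$ of $X$ and an $r \times r$ minor $p_2$ of $MX$ that are nonzero at $X_\star$, their product $q = p_1 p_2$ satisfies $q(X_\star) \ne 0$, so $q \not\equiv 0$.

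To finish, I would note that whenever the columns are dependent every $s \times s$ minor vanishes, so $p_1 = 0$; and whenever $\rank(MX) < r$ every $r \times r$ minor of $MX$ vanishes, so $p_2 = 0$. In either case $q = 0$, so the union of the two bad sets lies in $\{X : q(X) = 0\}$, a measure-zero set. The one place that requires care is arranging a \emph{simultaneous} witness, which is why the construction threads the directions $v_1, \dots, v_r$ through both $X_\star$ and $M X_\star$; the measure-zero property of polynomial zero sets is routine (induction on dimension via Fubini, using that a nonzero univariate polynomial has finitely many roots).
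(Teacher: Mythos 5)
Your proof is correct and takes essentially the same approach as the paper's: both confine the two failure events to zero sets of polynomials in the entries of $X$ that do not vanish identically, and invoke the standard fact that such zero sets are Lebesgue-null. The only difference is the choice of certifying polynomials---you take a product of an $s\times s$ minor of $X$ with an $r\times r$ minor of $MX$ and verify nonvanishing at an explicit witness $X_\star$, whereas the paper uses $\det(X^T X)$ and $\det(\Phi^T X A)$ (with $\Phi$ spanning $\Range(M^T)$ and $A$ a fixed full-rank $s\times r$ matrix) and handles the two null sets by a union.
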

\begin{proof}
See \cref{app:classification_of_data_equivalent_subspaces}
\end{proof}
Thus, if columns of $X$ are the vectors $\xi_1,\ldots,\xi_r$, the $r$-dimensional range of $M=F^T$ is spanned by $F^T\xi_1,\ldots,F^T\xi_r$.
Related approximation results based on random sampling for matrices with decaying singular values, but possibly full rank, can be found in \cite{Halko2011finding}.

On the other hand, $\Range(F^T)$ cannot be uniquely determined from the pairs $\big( x_i, F(x_i) \big)$ until the number of samples $s$ is at least as large as $n-r$.
When $s < n-r$ there is always a map $\tilde{F}$ with $\tilde{F} x_i = F x_i$ for every $i=1, \ldots, s$, but with $\Range(\tilde{F}^T) \neq \Range(F^T)$.
To see this, when $s < n-r$ we cannot have $\Null(F) \subset \Range(X)$ since their dimensions are incompatible.
Considering the orthogonal complements, we cannot have $\Range(X)^{\perp} \subset \Range(F^T)$.
Given a (reduced) SVD $F = U\Sigma V^T$,
adding a nonzero vector $v \in \Range(X)^{\perp}$ with $v\notin \Range(F^T)$ to the first row of $V^T$ produces $\tilde{F} = U \Sigma (V^T + e_1 v^T)$ with the stated properties.
To quantify the degree of ``non-uniqueness'' for $\Range(F^T)$ given the samples, we recall that $r$-dimensional subspaces of $\R^n$ belong to the Grassmann manifold $\calG_{n,r}$ \cite{Wong1967differential, Absil2004riemannian, Bendokat2020grassmann}.
For generic samples forming the columns of an $n\times s$ matrix~$X$, the following \cref{thm:classification_of_data_equivalent_subspaces} characterizes the possible subspaces $\Range(\tilde{F}^T)$ as a submanifold of $\calG_{n,r}$ with dimension $r(n-s)$.

\begin{theorem}
\label{thm:classification_of_data_equivalent_subspaces}
Let $F:\R^{m\times n}$ be a matrix with $\rank(F) = r$ and let
$X\in \R^{n\times s}$ be a matrix with linearly independent columns satisfying $\rank(F X) = r$.
Then
\begin{equation}
    \calV_{F,X} = \left\{ \Range\big(\tilde{F}^T\big) \ : \ \tilde{F}\in\R^{m\times n},\ \rank(\tilde{F}) = r, \ \mbox{and} \ \tilde{F} X = F X \right\}.
\end{equation}
is an $r(n-s)$-dimensional submanifold of $\calG_{n,r}$ diffeomorphic to $\R^{(n-s)\times r}$.
The diameter of $\calV_{F,X}$ in $\calG_{n,r}$ is bounded by
\begin{equation}
    \frac{\pi}{2} \sqrt{\min\{r, n-s \}} \leq
    \sup_{V,W \in \calV_{F,X}} d(V,W) 
    \leq \sup_{V,W \in \calG_{n,r}} d(V,W)
    \leq \frac{\pi}{2}\sqrt{r},
\end{equation}
where $d$ denotes the geodesic distance on $\calG_{n,r}$ (see \cite{Wong1967differential, Bendokat2020grassmann}).
\end{theorem}
\begin{proof}
See \cref{app:classification_of_data_equivalent_subspaces}.
\end{proof}
Thus, if $s<n$ (so that the dimension of $\mathcal{V}_{F,X}$ is positive), then there are infinitely many possibilities for the range of $F^T$ that are consistent with the sampled data.
When $s \leq n-r$, these possibilities are not close together, in fact, they can differ by the maximum amount possible between subspaces.

In the general setting, the covariance matrices $W_x$ and $W_g$ may not have low rank.
However, in many cases they have quickly decaying eigenvalues giving them low ``effective rank''
\begin{equation}
    r(W) = \Tr(W) / \Vert W\Vert, 
\end{equation}
where $\Vert W\Vert$ is the operator norm (see Remark~5.53 in Vershynin \cite{Vershynin2012introduction}, as well as Sections 5.6, 7.6.1, and 9.2.3 in \cite{Vershynin2018high}).
For linear time-invariant systems this is known to hold for the observability and controllability Gramians when there are few inputs and observations \cite{Penzl1999cyclic, Baker2015fast}.
Results on the non-asymptotic theory of random matrices including \cite{Rudelson2007sampling, Oliveira2010sums, Vershynin2012introduction, Koltchinskii2017concentration, Holodnak2018probabilistic} show that covariance matrices with low effective rank can be estimated accurately in the operator norm with high probability using a number of samples that is independent of the ambient dimension $n$.
This allows for accurate computation of projectors \cref{eqn:balancing_projection} using state and gradient samples even when the state dimension far exceeds the number of samples we can reasonably obtain.

\subsection{Extracting nonlinear features using a kernel method}
\label{subsec:kernel_CoBRAS}
The fact that the features \cref{eqn:linear_feature_map} can be computed relying only on inner products in $\R^n$ suggests a natural reformulation as a kernel method yielding nonlinear features $z = h(x)$.
Here, we apply the same approach as above after lifting the problem into a high-dimensional (possibly infinite-dimensional) reproducing kernel Hilbert space (RKHS) and computing the required inner products implicitly via the reproducing kernel.
A similar approach is described in \cite{Romor2020kernel}, where feature maps into finite-dimensional spaces were computed explicitly.

We begin by summarizing some basic results and definitions that can be found in \cite{Berlinet2011reproducing, Hofmann2008kernel}.
Recall that a RKHS $\calH$ over a subset $\calX \subset\R^n$ is a Hilbert space of functions where point-wise evaluation of $f \in \calH$ at $x\in\calX$ is a bounded linear functional
\begin{equation}
    f(x) = \left\langle K_x,\ f \right\rangle_{\calH}.
\end{equation}
The function $K:\calX\times\calX \to \R$ defined by $K(x,y) := K_y(x) = \left\langle K_x,\ K_y \right\rangle_{\calH}$ is called the ``reproducing kernel'' of the RKHS.
The reproducing kernel is symmetric, namely $K(x,y) = K(y,x)$, and positive-definite in the sense that for any $x_1, \ldots, x_N \in \calX$ and constants $c_1, \ldots, c_N \in \R$ we have
\begin{equation}
    \sum_{i,j = 1}^N c_i K(x_i, x_j) c_j  \geq 0.
\end{equation}
Moreover, any function $K:\calX\times\calX \to \R$ with these properties uniquely defines a RKHS whose reproducing kernel is $K$.
In particular, the RKHS defined by $K$ is the completion of the span of $\{ K_x \}_{x\in\calX}$.
The ``feature map'' $\Phi_K : \calX \to \calH$ defined by $x\mapsto K_x$ provides lifted representatives of states in the RKHS.
When the kernel is continuous, then so is the feature map.
The so-called ``kernel trick'' in machine learning refers to working with such lifted representatives implicitly by means of the kernel $K$ without explicit calculations in $\calH$, which may be infinite-dimensional.

Observe that the empirical covariance matrix $W_x$ in \cref{eqn:state_covariance} may be defined by its action on a vector $v\in\R^n$ by
\[
W_x v = \frac{1}{s_x}\sum_{i=1}^{s_x}x_i \langle x_i, \ v \rangle.
\]
Analogously, the lifted empirical covariance operator $W_x':\calH\to\calH$ for a collection of states $\{x_i\}_{i=1}^{s_x} \subset \calX$ is defined by its action on a function $f\in\calH$ by
\begin{equation}
    W_x' f = \frac{1}{s_x} \sum_{i=1}^{s_x} \big(K_{x_i} - K_0\big)\big\langle K_{x_i} - K_0, \ f \big\rangle_{\calH},
    \label{eqn:lifted_state_covariance}
\end{equation}
where we are centering about the lift of the origin $K_0$.
Analogously to the finite-dimensional case, this operator can be factorized as $W_x' = X X^*$ with $X^*$ denoting the adjoint of the operator $X: \R^{s_x} \to \calH$ whose action on $w=(w_1, \ldots, w_{s_x})\in\R^{s_x}$ is defined by
\begin{equation}
    X w = \frac{1}{\sqrt{s_x}} \sum_{i=1}^{s_x} \big( K_{x_i} - K_0 \big) w_i.
    \label{eqn:lifted_X}
\end{equation}

We must define a gradient covariance operator for balancing by lifting the gradients $\grad (\xi^T F)(x)$ into the RKHS.
To do this, we ensure that there is a differentiable function $F'$ defined on the image $\calX' := \Phi_K(\calX) \subset \calH$ so that the diagram
\begin{equation}
    \begin{tikzcd}
        \calX \arrow[r, "F"] \arrow[d, "\Phi_K"'] & \R^m \\ 
        \calX' \arrow[ur, "F'"'] &
    \end{tikzcd}
    \label{cd:function_lifting}
\end{equation}
commutes.
We lift the gradient $\grad (\xi^T F)(x)$ into $\calH$ by computing $\grad(\xi^T F')(K_x)$.
Since this will involve differentiating the feature map, we briefly summarize a remarkable result by Zhou \cite{Zhou2008derivative} showing that the derivatives of a smooth kernel also have reproducing properties.
Let $\alpha = (\alpha_1, \ldots, \alpha_n) \in \mathbb{N}_0^n$ be a multi-index and denote $\partial^{\alpha} = \frac{\partial^{\vert\alpha\vert}}{\partial x_1^{\alpha_1} \cdots \partial x_n^{\alpha_n}}$, where $\vert \alpha \vert = \alpha_1 + \cdots + \alpha_n$.
If $\calX$ is a bounded open set and $K\in C^{2s}(\bar{\calX}\times \bar{\calX})$, $s\geq 1$, is the reproducing kernel for $\calH$, then Theorem~1 in \cite{Zhou2008derivative} shows that for $\vert \alpha \vert \leq s$ the function $(\partial^{\alpha} K)_x$ defined by $(\partial^{\alpha} K)_x(y) := \partial^{(\alpha, 0)} K(x,y) = \partial^{\alpha} K_y(x)$ is an element of $\calH$ and
\begin{equation}
    \partial^{\alpha} f(x) = \left\langle (\partial^{\alpha} K)_x,\ f \right\rangle_{\calH}.
    \label{eqn:derivative_reproducing_property}
\end{equation}
In the following lemma, we use this result to show that the feature map is continuously Fr\'{e}chet differentiable.
\begin{lemma}
    \label{lem:feature_map_differentiability}
    Let $\calX \subset \R^n$ be a bounded open set and let $\calH$ be an RKHS with reproducing kernel $K\in C^2(\bar{\calX} \times \bar{\calX})$.
    Then the feature map $\Phi_K: x \mapsto K_x$ has a Fr\'{e}chet derivative $\D \Phi_K(x):\R^n \to \calH$ at each $x\in\calX$ and satisfies
    \begin{equation}
        \D \Phi_K(x) v = \sum_{j=1}^n (\partial^{e_j} K)_x v_j, \qquad
        \D \Phi_K(x)^* f = \grad f(x),
        \label{eqn:derivative_of_feature_map}
    \end{equation}
    for every $v=(v_1, \ldots, v_n)\in \R^n$ and $f \in \calH$.
    Moreover, the map $x \mapsto \D \Phi_K(x)$ with $x\in\calX$ is continuous with respect to the operator norm, i.e., $\Phi_K$ is a $C^1$ function on $\calX$ (see Definition~1.1.2 in Kesavan \cite{Kesavan2022nonlinear}).
\end{lemma}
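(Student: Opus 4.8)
The plan is to establish the three assertions in turn---the formula for $\D\Phi_K(x)$, the adjoint identity, and continuity---relying throughout on Zhou's derivative reproducing property \cref{eqn:derivative_reproducing_property}. Since $K\in C^2(\bar\calX\times\bar\calX)$ and $\vert e_j\vert = 1$, that property guarantees $(\partial^{e_j}K)_x\in\calH$, so the candidate operator $L_x v := \sum_{j=1}^n (\partial^{e_j}K)_x v_j$ is a well-defined bounded linear map $\R^n\to\calH$. The task is then to verify that $L_x$ is the Fr\'echet derivative, i.e., that the remainder $R(v) := K_{x+v} - K_x - L_x v$ satisfies $\Vert R(v)\Vert_{\calH} = o(\Vert v\Vert)$ as $v\to 0$ (legitimate since $\calX$ is open, so $x+tv\in\calX$ for small $v$).

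First I would compute $\Vert R(v)\Vert_{\calH}^2 = \langle R(v), R(v)\rangle_{\calH}$ by expanding into the nine bilinear pairings among $K_{x+v}$, $K_x$, and the $(\partial^{e_j}K)_x$. Each pairing is evaluated \emph{exactly} using the reproducing properties: $\langle K_{x+a}, K_{x+b}\rangle_{\calH} = K(x+a, x+b)$, $\langle (\partial^{e_j}K)_x, K_{x+b}\rangle_{\calH} = \partial^{e_j}K_{x+b}(x)$, and $\langle(\partial^{e_j}K)_x,(\partial^{e_k}K)_x\rangle_{\calH} = \partial^{e_j}(\partial^{e_k}K)_x(x)$. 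Introducing the $C^2$ function $G(a,b) := K(x+a, x+b)$, every quantity becomes $G$ or one of its first/second partials at $(0,0)$, $(0,v)$, $(v,0)$, $(v,v)$; collecting terms and using the symmetry $G(a,b)=G(b,a)$ to merge the cross terms yields
\begin{equation}
\begin{aligned}
    \Vert R(v)\Vert_{\calH}^2
    = {}& \big[ G(v,v) - G(v,0) - G(0,v) + G(0,0) \big] \\
    & - 2\sum_k v_k\big[ \partial_{a_k}G(0,v) - \partial_{a_k}G(0,0) \big]
      + \sum_{j,k} v_j v_k\, \partial_{a_j}\partial_{b_k} G(0,0).
\end{aligned}
\end{equation}

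The crux is to show this is $o(\Vert v\Vert^2)$, which I would do with second-order integral remainders valid because $G\in C^2$. Writing the mixed second difference as $G(v,v)-G(v,0)-G(0,v)+G(0,0) = \sum_{j,k} v_j v_k\int_0^1\!\!\int_0^1 \partial_{a_j}\partial_{b_k}G(sv,tv)\,ds\,dt$ and the bracketed first difference as $\partial_{a_k}G(0,v)-\partial_{a_k}G(0,0) = \sum_j v_j\int_0^1 \partial_{a_k}\partial_{b_j}G(0,tv)\,dt$, continuity of the second partials replaces each integrand by its value at the origin up to $o(1)$. The three quadratic forms then carry coefficients $+1$, $-2$, $+1$ on $\sum_{j,k} v_j v_k\,\partial_{a_j}\partial_{b_k}G(0,0)$---the relabeling $j\leftrightarrow k$ matches the middle form to the others because $v_j v_k$ is symmetric---so they cancel exactly, leaving $\Vert R(v)\Vert_{\calH}^2 = o(\Vert v\Vert^2)$. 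This bookkeeping-plus-cancellation is the step I expect to be the main obstacle; everything else is comparatively mechanical.

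Granting the derivative formula, the adjoint identity in \cref{eqn:derivative_of_feature_map} is immediate: for $f\in\calH$ and $v\in\R^n$, \cref{eqn:derivative_reproducing_property} gives $\langle \D\Phi_K(x)v,\, f\rangle_{\calH} = \sum_j v_j \langle (\partial^{e_j}K)_x, f\rangle_{\calH} = \sum_j v_j\, \partial^{e_j} f(x) = \langle v,\, \grad f(x)\rangle_{\R^n}$, whence $\D\Phi_K(x)^* f = \grad f(x)$ by definition of the adjoint. Finally, for continuity in operator norm I would use $\Vert(\D\Phi_K(x)-\D\Phi_K(x'))v\Vert_{\calH} \le \Vert v\Vert \big(\sum_j \Vert (\partial^{e_j}K)_x - (\partial^{e_j}K)_{x'}\Vert_{\calH}^2\big)^{1/2}$ and observe that, by the same reproducing identity, $\Vert (\partial^{e_j}K)_x - (\partial^{e_j}K)_{x'}\Vert_{\calH}^2$ expands into the single scalar function $\kappa_j(a,b) := \partial^{e_j}(\partial^{e_j}K)_a(b)$ evaluated at $(x,x)$, $(x',x)$, and $(x',x')$ with signs $+1,-2,+1$. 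Since $\kappa_j$ is continuous (as $K\in C^2$), all three terms converge to $\kappa_j(x,x)$ as $x'\to x$, so their combination tends to $0$; hence $\Vert \D\Phi_K(x)-\D\Phi_K(x')\Vert_{\mathrm{op}}\to 0$ and $\Phi_K\in C^1(\calX)$.
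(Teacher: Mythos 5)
Your proof is correct, and for the central step it takes a genuinely different route from the paper's. The paper never expands the Fr\'{e}chet remainder: it invokes step~2 of Theorem~1 in Zhou \cite{Zhou2008derivative} to get convergence of the difference quotients $\tfrac{1}{t}(K_{x+te_j}-K_x)\to(\partial^{e_j}K)_x$ in $\calH$ (existence of partial derivatives of $\Phi_K$), proves continuity of each map $x\mapsto(\partial^{e_j}K)_x$ via the same $+1,-2,+1$ kernel expansion you use at the end, and then cites the Banach-space version of the classical ``continuous partials imply $C^1$'' theorem (Propositions~1.1.3--1.1.4 in Kesavan \cite{Kesavan2022nonlinear}) to conclude Fr\'{e}chet differentiability and identify the derivative. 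You instead verify the definition directly: expand $\Vert R(v)\Vert_{\calH}^2$ through the reproducing properties and cancel the three quadratic forms, each equal to $\sum_{j,k}v_jv_k\,\partial^{(e_j,e_k)}K(x,x)$ with coefficients $+1,-2,+1$, with errors $o(\Vert v\Vert^2)$ by continuity of the mixed second partials (note only the mixed partials across the two argument slots enter, which $K\in C^2(\bar\calX\times\bar\calX)$ supplies, so the integral-remainder identities are legitimate). What the paper's route buys is brevity and reuse---continuity of the partial-derivative maps is needed anyway for the $C^1$ claim, so differentiability comes for free from the cited proposition; what your route buys is self-containedness---you need only Zhou's statement that $(\partial^{e_j}K)_x\in\calH$ satisfies \cref{eqn:derivative_reproducing_property}, not the difference-quotient convergence nor the Kesavan machinery, and the derivative formula in \cref{eqn:derivative_of_feature_map} emerges directly rather than being inferred from the partials. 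Your adjoint computation is identical to the paper's, and your operator-norm continuity argument is the same computation organized componentwise instead of through the continuous matrix $H(x,y)=\D\Phi_K(x)^*\D\Phi_K(y)$ that the paper uses.
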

\begin{proof}
See \cref{app:gradient_lifting}
\end{proof}
Given this setup, our main result is the following:
\begin{theorem}
    \label{thm:gradient_lifting}
    Let $\calX \subset \R^n$ be a bounded open set and let $\calH$ be an RKHS with reproducing kernel $K\in C^2(\bar{\calX} \times \bar{\calX})$ so that the feature map $\Phi_K$ is injective on the closure $\bar\calX$
    and the derivative Gram matrix,
    \begin{equation}
        G(x) := 
        \D \Phi_K(x)^* \D \Phi_K(x) =
        \begin{bmatrix}
            \partial^{(e_1, e_1)}K(x,x) & \cdots & \partial^{(e_1, e_n)}K(x,x) \\
            \vdots & \ddots & \vdots \\
            \partial^{(e_n, e_1)}K(x,x) & \cdots & \partial^{(e_n, e_n)}K(x,x)
        \end{bmatrix},
        \label{eqn:kernel_derivative_Gram_matrix}
    \end{equation}
    is positive-definite for every $x\in\calX$.
    Then the feature map $\Phi_K$ is a $C^1$ embedding of $\calX$ into $\calH$ (see Definition~3.3.1 in Margalef-Roig and Dominguez \cite{Margalef1992differential}).
    Hence, there is a $C^1$ function $F'=F \circ \Phi_K^{-1}:\Phi_K(\calX)\to\R^m$ so that \cref{cd:function_lifting} commutes and
    for every $x\in\calX$ we have
    \begin{equation}
        \grad ( \xi^T F' )(K_x) = \D \Phi_K(x) G(x)^{-1} \grad (\xi^T F)(x) \in \calH.
        \label{eqn:lifted_gradient}
    \end{equation}
\end{theorem}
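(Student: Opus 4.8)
The plan is to establish the three assertions in order: that $\Phi_K$ is a $C^1$ immersion, that it is a topological embedding, and finally to construct $F'$ and verify the gradient formula \cref{eqn:lifted_gradient}. The ingredients are \cref{lem:feature_map_differentiability} (which already gives differentiability of $\Phi_K$ together with the expressions for $\D\Phi_K(x)$ and its adjoint), positive-definiteness of $G(x)$, and compactness of $\bar\calX$.

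First I would show $\Phi_K$ is an immersion. By \cref{lem:feature_map_differentiability}, $\Phi_K$ is $C^1$ with derivative $\D\Phi_K(x)$, and for any $v\in\R^n$ we have $\Vert \D\Phi_K(x) v\Vert_{\calH}^2 = \langle v,\ \D\Phi_K(x)^*\D\Phi_K(x)\, v\rangle_{\R^n} = \langle v,\ G(x) v\rangle_{\R^n}$. Positive-definiteness of $G(x)$ therefore forces $\D\Phi_K(x)$ to be injective at every $x\in\calX$. Since the domain $\R^n$ is finite-dimensional, $\Range(\D\Phi_K(x))$ is a finite-dimensional, hence closed and topologically complemented, subspace of $\calH$, so $\D\Phi_K(x)$ is an immersion in the Banach-space sense. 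To upgrade this injective immersion to an embedding I would use that $\Phi_K$ is continuous on the compact set $\bar\calX$ (as $K$ is continuous) and injective there by hypothesis; a continuous injection from a compact space into a Hausdorff space is a homeomorphism onto its image, and restricting to the open subset $\calX$ preserves this. An injective $C^1$ immersion that is a homeomorphism onto its image is a $C^1$ embedding in the sense of Margalef-Roig and Dominguez, establishing the first claim.

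Next, the embedding provides a continuous inverse $\Phi_K^{-1}:\Phi_K(\calX)\to\calX$, so I define $F' := F\circ\Phi_K^{-1}$, which makes \cref{cd:function_lifting} commute by construction. Viewing $\Phi_K(\calX)$ as a $C^1$ submanifold of $\calH$ with tangent space $T_{K_x}\Phi_K(\calX) = \Range(\D\Phi_K(x))$, the identity $F = F'\circ\Phi_K$ shows $F'$ is $C^1$, and the chain rule gives for every $v\in\R^n$ the relation $\D(\xi^T F)(x)\, v = \D(\xi^T F')(K_x)\, \D\Phi_K(x)\, v$, identifying the differential of $\xi^T F'$ on the tangent space. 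I then interpret $\grad(\xi^T F')(K_x)$ as the Riesz representative of this functional within the closed subspace $\Range(\D\Phi_K(x))$, i.e.\ the unique element of that subspace reproducing the differential under $\langle\cdot,\cdot\rangle_{\calH}$. Setting $w = \D\Phi_K(x)\, G(x)^{-1}\grad(\xi^T F)(x)$ and using $\D\Phi_K(x)^*\D\Phi_K(x) = G(x)$ together with symmetry of $G(x)$, I would compute
\[
\langle w,\ \D\Phi_K(x) v\rangle_{\calH}
= \big\langle G(x)^{-1}\grad(\xi^T F)(x),\ G(x) v\big\rangle_{\R^n}
= \big\langle \grad(\xi^T F)(x),\ v\big\rangle_{\R^n},
\]
which equals $\D(\xi^T F')(K_x)\,\D\Phi_K(x)\, v$ by the chain rule above. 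Since $w$ lies in $\Range(\D\Phi_K(x))$ and reproduces the differential, it must coincide with $\grad(\xi^T F')(K_x)$, yielding \cref{eqn:lifted_gradient}.

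I expect the main obstacle to be making precise the sense in which $\grad(\xi^T F')(K_x)$ is an element of $\calH$: the function $F'$ is defined only on the finite-dimensional submanifold $\Phi_K(\calX)$, which has empty interior when $\calH$ is infinite-dimensional, so the ordinary Hilbert-space gradient is unavailable. The correct reading, namely the Riesz representative taken inside the tangent space (a closed subspace of $\calH$, hence itself a Hilbert space), is precisely what both makes the object well-defined and forces it into $\Range(\D\Phi_K(x))$, matching the right-hand side of \cref{eqn:lifted_gradient}. Once this interpretation is fixed, the remaining work is the immersion/compactness argument and the routine adjoint computation displayed above.
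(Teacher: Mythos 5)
Your proposal is correct and follows essentially the same route as the paper: positive-definiteness of $G(x)$ gives injectivity of $\D\Phi_K(x)$ with closed finite-dimensional range (immersion), compactness of $\bar\calX$ plus injectivity gives a topological embedding, these combine into a $C^1$ embedding, and then the chain rule together with invertibility of $G(x)$ yields \cref{eqn:lifted_gradient}. The only differences are stylistic: the paper derives the gradient formula by solving $G(x)w = \grad(\xi^T F)(x)$ where you verify the candidate via its Riesz-representative property in $T_{K_x}\Phi_K(\calX)$ (the same algebra), and the paper reproves the ``immersion plus topological embedding implies $C^1$ embedding'' step from Propositions~3.2.8 and 3.2.13 of Margalef-Roig and Dominguez where you cite it as a known theorem.
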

\begin{proof}
See \cref{app:gradient_lifting}
\end{proof}
Note that the theorem requires that the feature map $\Phi_K:x\mapsto K_x$ be injective.  An easy test for injectivity is that, for every distinct $x_1,x_2\in \bar\calX$,    \begin{equation}
    \big\Vert K_{x_1} - K_{x_2} \big\Vert_{\calH}^2 =
    K(x_1,x_1) - 2 K(x_1, x_2) + K(x_2, x_2) > 0.
    \label{eqn:kernel_is_injective}
\end{equation}

To form the empirical gradient covariance operator, we first produce gradient samples $\{ g_i \}_{i=1}^{s_g}$ from randomly chosen
directions $\xi_i$ and states $\tilde x_i$, as in equation~\cref{eqn:g-def} (here,
we use $\tilde x_i$ for the sampled states, because these may be distinct
from the states $x_i$ used to form the state covariance operator
in~\cref{eqn:lifted_state_covariance}). We then lift the gradient samples using \cref{thm:gradient_lifting}, to define the empirical gradient covariance operator
\begin{equation}
    W_g' = \frac{1}{s_g} \sum_{i=1}^{s_g} \D \Phi_K(\tilde x_i) G(\tilde x_i)^{-1} g_i g_i^T G(\tilde x_i)^{-1} \D \Phi_K(\tilde x_i)^*.
\end{equation}
Analogously to the finite-dimensional case, this operator may be factorized as $W_g' = Y Y^*$ with $Y:\R^{s_g} \to \calH$ defined by its action on $w=(w_1, \ldots, w_{s_g})\in\R^{s_g}$ according to
\begin{equation}
    Y w = \frac{1}{\sqrt{s_g}}\sum_{i=1}^{s_g} w_i \D \Phi_K(\tilde x_i) G(\tilde x_i)^{-1} g_i.
    \label{eqn:lifted_Y}
\end{equation}
The adjoint of this operator acting on a lifted state $K_x$ for $x\in\calX$ is the vector in $\R^{s_g}$ whose $i$th component is given by
\begin{equation}
\boxed{
    \big[Y^* K_x\big]_i = \frac{1}{\sqrt{s_g}} g_i^T G(\tilde x_i)^{-1} \grad K_x(\tilde x_i).
    }
    \label{eqn:gradient_state_RKHS_product}
\end{equation}
Here, the $n$-dimensional vector
\begin{equation}
    \grad K_y(x)
    = \left( \partial^{(e_1,0)}K(x,y),\ \ldots,\ \partial^{(e_n,0)}K(x,y) \right)
\end{equation}
is found by differentiating the kernel with respect to the coordinates of the first entry.
Thus, we make the crucial observation that \cref{eqn:gradient_state_RKHS_product} can be computed using the kernel without doing explicit calculations in the possibly infinite-dimensional RKHS.

To find a truncated balancing transformation for $W_x'$ and $W_g'$, we compute an SVD of the $s_g\times s_x$ matrix
$
    Y^* X
    = U \Sigma V^T,
$
whose elements are given by
\begin{equation}
\boxed{
    [Y^*X]_{i,j} = \frac{1}{\sqrt{s_g s_x}} g_i^T G(\tilde x_i)^{-1} \big( \grad K_{x_j}(\tilde x_i) - \grad K_{0}(\tilde x_i) \big),
    }
\end{equation}
thanks to \cref{eqn:gradient_state_RKHS_product}.
As in the finite-dimensional case, an oblique projection $P:\calH \to \calH$ onto an $r$-dimensional subspace may be defined by \cref{eqn:balancing_projection}.
The information extracted by $P$ acting on a lifted state $K_x$ is encoded in the $r$-dimensional feature vector
\begin{equation}
\boxed{
    z = h(x)
    = \Psi^*(K_x - K_0)
    = \Sigma_r^{-1/2} U_r^T \big(Y^*K_x - Y^*K_0 \big).
    }
    \label{eqn:nonlinear_features}
\end{equation}
These features can be computed explicitly using the kernel and \cref{eqn:gradient_state_RKHS_product}.
The feature vectors associated with the original data are given by the columns of
\begin{equation}
    \begin{bmatrix}
    h(x_1) & \cdots & h(x_{s_x})
    \end{bmatrix} = \sqrt{s_x} \Sigma_r^{1/2} V_r^T.
\end{equation}
Although we do not pursue nonlinear Galerkin modeling in this paper, one could use the derivative of the feature vector \cref{eqn:nonlinear_features} for this purpose.
The action of the derivative on a vector $v\in\R^n$ is given by
\begin{equation}
    \D h(x) v = \Sigma_r^{-1/2} U_r^T \D \big( Y^* \Psi_K(x) \big) v,
    \label{eqn:kernel_embedding_tangent_map}
\end{equation}
where the elements of $\D \big( Y^* \Psi_K(x) \big) v =  \frac{1}{\sqrt{s_g}} \big[ g_i^T G(\tilde x_i)^{-1} H(\tilde x_i, x) v \big]_{i=1}^{s_g}$ are computed using the matrix $H(x,y) = \D \Phi_K(x)^*\D \Phi_K(y) = \big[ \partial^{(e_i, e_j)} K(x,y) \big]_{i,j=1}^n$.


Some examples of common kernels and the corresponding functions $\grad K_y(x)$ and $G(x)^{-1}$ are given in \cref{tab:common_kernels}.
Fortunately, we see that $\grad K_y(x)$ can be computed with time complexity $\mathcal{O}(n)$.
Moreover, in each case $G(x)^{-1}$ has structure that allows us to act with it on a vector with time complexity $\mathcal{O}(n)$. 
These properties allow the method to be implemented on problems with very large $n$, where matrix inversion and even dense matrix-vector products would be computationally prohibitive. 
Finally, we note that when $K(x,y) = x^T y$, the kernel method becomes identical to the technique described in \cref{subsec:linear_features}.

\begin{table}[]
    \centering
    \caption{Some common smooth kernels and the functions required to implement the kernel method.}
    \begin{tabular}{|c|c|c|}
        \hline
        $K(x,y)$ & $\grad K_y(x)$ & $G(x)^{-1}$ \\
        \hline
        $\alpha + x^T y, \quad \alpha \geq 0$ & $y$ & $I$ \\
        $(\alpha + x^T y)^p, \quad \left\{\substack{p > 1 \\ \alpha > 0}\right.$ & $p (\alpha+x^T y)^{p-1} y$ & $\frac{1}{p (\alpha + \Vert x\Vert^2)^{p-1}} \left[ I - \left( \frac{p-1}{\alpha + p \Vert x\Vert^2} \right) x x^T \right]$ \\
        $\exp\left( -\frac{\Vert x - y\Vert^2}{2 \sigma^2} \right)$ & $-\frac{1}{\sigma^2} K(x,y)(x-y)$ & $\sigma^2 I$ \\
        \hline
    \end{tabular}
    \label{tab:common_kernels}
\end{table}

\section{Randomized adjoint sampling methods}
\label{sec:adjoint_sampling}

In order to obtain gradient samples described in \cref{subsec:linear_features}, we must be able to compute gradients of the output sequence map~$F$.  Perhaps the most natural way to compute such gradients is by finite difference approximation; however, this method would scale poorly with the state dimension.  Instead, in this section we describe a more computationally efficient way of obtaining these gradient samples, using linearized adjoint equations derived from \cref{eqn:FOM}.
Our method entails computing gradients of output sequences with respect to the initial conditions $x_0$ of ``mini-trajectories'' with length $L+1$.
In computing these gradients, the adjoint method also produces gradients with respect to intermediate states along these trajectories, which, in general, must be discarded.
However, when we sample from the probability distribution constructed from longer trajectories of the system in the paragraph preceding \cref{subsec:linear_features} the intermediate gradient information can be retained, yielding computational benefits.
We discuss a gradient sampling method for an empirical version of this distribution in \cref{subsec_long_trajectories}.
In \cref{subsec_stationary_distribution} we discuss a simplified method that can be applied when the distribution over pairs $(x, \bar{u})$ is ``stationary'' in a sense that will be described later.

The random vectors $\xi\in \R^{m_0(L+1)}$ that we use to sample the gradient are constructed as follows.
Let $\eta \in \R^{m_0}$ be a zero mean random vector with $\E[\eta \eta^T] = (L+1) I$ and let $\tau \in \{0, \ldots, L \}$ be chosen uniformly at random.
Then the random vector $\xi \in \R^{m_0(L+1)}$ formed by placing $\eta$ into the $\tau$th slot of 
\begin{equation}
    \xi = \begin{pmatrix}
    0 & \cdots & 0 & \eta & 0 & \cdots & 0
    \end{pmatrix}
    = e_{\tau}\otimes \eta,
    \label{eqn:random_vector}
\end{equation}
also has zero mean and covariance $\E[\xi \xi^T] = I$.

Along a trajectory of \cref{eqn:FOM}, the gradients
\begin{equation}
g_{\eta}(t,k) = \grad_x (\eta^T F_{k})\big(x(t), u(t\scolon t+k-1)\big)
\label{eqn:gradients_defn}
\end{equation}
satisfy the adjoint equation
\begin{equation}
\begin{aligned}
    g_{\eta}(t_f-k, k) &= \D_x f\big( x(t_f-k), u(t_f -k) \big)^T g_{\eta}(t_f-(k-1), k-1) \\
    g_{\eta}(t_f,0) &= \D_x g(x(t_f))^T \eta.
\end{aligned}
    \label{eqn:adjoint_equation}
\end{equation}
With our choice of $\xi$ given by \cref{eqn:random_vector}, we generate
\begin{equation}
    \grad_x (\xi^T F)\big(x(0), u(0\scolon L-1) \big) 
    = \grad_x (\eta^T F_{\tau})\big( x(0), u(0\scolon \tau-1) \big)
    = g_{\eta}(0,\tau)
    \label{eqn:gradient_sample}
\end{equation}
by computing $g_{\eta}(\tau-k, k)$ for $k = 0, 1,\ldots, \tau$ recursively using \cref{eqn:adjoint_equation}.
In what follows, we discuss cases where these intermediate gradient samples can be incorporated in the empirical gradient covariance.

\subsection{Sampling from long trajectories}
\label{subsec_long_trajectories}
In many cases of practical interest we generate one or more long trajectories of snapshots $x(0\scolon N+L)$ via simulation with different initial conditions and input sequences $u(0\scolon N+L-1)$.
For simplicity, we consider a single long trajectory where
the first $N+1$ snapshots serve as initial conditions for the time-$L$ input-output map \cref{eqn:input_output_map}.
Hence, the state covariance is based on a uniform distribution over the first $N+1$ snapshots.
To draw samples of the gradient with respect to this distribution, we provide \cref{alg:gradient_sampling_long_trajs}.

\begin{algorithm}
\caption{Sample gradients from long trajectories}
\label{alg:gradient_sampling_long_trajs}
\begin{algorithmic}[1]
\STATE{\textbf{input}: the time-horizon $L$, the number of samples $s_g$, a long trajectory $\{ x(0), x(1), \ldots, x(N+L) \}$, and a distribution for $\eta\in\R^{m_0}$ with zero mean and $\E[\eta\eta^T] = (L+1)I$}
\FOR{$i=1, 2, \ldots, s_g$}
    \STATE{Draw $t'$ uniformly from $\{ 0, \ldots, N \}$, draw $\tau'$ uniformly from $\{ 0,\ldots, L \}$, and draw $\eta$ from its distribution.}
    \STATE{Letting $t_f = t' + \tau'$, solve the adjoint equation \cref{eqn:adjoint_equation} to generate $g_{\eta}(t_f - k, k)$ for each $k=0, \ldots, \min\{ L, t_f \}$.}
    \STATE{With $\tau_{\text{min}} = \max\{0, t_f-N\}$ and $\tau_{\text{max}} = \min\{ L, t_f \}$, arrange the samples with $\tau_{\text{min}} \leq k \leq \tau_{\text{max}}$ into the matrix \begin{equation*} 
    Y_i = 
    \frac{1}{\sqrt{1+\tau_{\text{max}}-\tau_{\text{min}}}} 
    \begin{bmatrix}
        g_{\eta}(t_f-\tau_{\text{min}}, \tau_{\text{min}}) 
        & \cdots 
        & g_{\eta}(t_f-\tau_{\text{max}}, \tau_{\text{max}})
    \end{bmatrix}.
    \end{equation*}}
\ENDFOR
\RETURN{the gradient sample matrix $Y = \frac{1}{\sqrt{s_g}}\begin{bmatrix}Y_1 & \cdots & Y_{s_g} \end{bmatrix}$.}
\end{algorithmic}
\end{algorithm}

During the $i$th stage of this procedure we obtain samples by solving the adjoint equation \cref{eqn:adjoint_equation} over a time horizon of length $\min\{L, t_f \}$.
In most applications, the computational cost to act with $\D_x f(x, u)^T$ on a vector is comparable to evaluating $f(x,u)$, meaning that the cost of each stage is comparable to simulating the FOM \cref{eqn:FOM} over the same time horizon $\min\{L, t_f \}$.

In what follows, we show that \cref{alg:gradient_sampling_long_trajs} corresponds to a Monte-Carlo approximation for the gradient covariance.
First, we observe that the gradient covariance matrix can be written in terms of the sequences generated by solving \cref{eqn:adjoint_equation} from final times $t_f = t' + \tau'$ with $(t',\tau')$ drawn uniformly from $\{0, \ldots, N\}\times\{0, \ldots, L\}$.
In particular, given a final time $t_f$ between $0$ and $N+L$, the number of initial times $t$ and prediction horizons $\tau$ that sum to $t_f$ is given by
\begin{equation}
\begin{aligned}
    \nu(t_f) =  \sum_{t=0}^N \sum_{\tau = 0}^L \delta_{t+\tau, t_f} 
    &= 1 + \min\{ t_f,\ N,\ L,\ N+L-t_f \} \\
    &= 1 + \min\{ L, t_f \} - \max\{0, t_f-N\}.
\end{aligned}
\label{eqn:final_time_counter}
\end{equation}
Using this counting function we can express the gradient covariance as
\begin{multline}
     W_g 
    = \frac{1}{(N+1)(L+1)} \sum_{t=0}^N \sum_{\tau = 0}^L \E_{\eta}\left[ g_{\eta}(t, \tau) g_{\eta}(t, \tau)^T \right] \\
    = \frac{1}{(N+1)(L+1)} \sum_{t'=0}^N \sum_{\tau' = 0}^L \E_{\eta}\left[ \frac{1}{\nu(t'+\tau')} \sum_{t=0}^N \sum_{\tau = 0}^L \delta_{t+\tau, t'+\tau'} g_{\eta}(t, \tau) g_{\eta}(t, \tau)^T \right] \\
    = \frac{1}{(N+1)(L+1)} \sum_{t'=0}^N \sum_{t_f = t'}^{t'+L} \E_{\eta}\left[ \frac{1}{\nu(t_f)} \sum_{k = \max\{0, t_f-N\}}^{\min\{ L, t_f \}} g_{\eta}(t_f-k, k) g_{\eta}(t_f-k, k)^T \right].
    \label{eqn:gradient_covariance_from_long_traj}
\end{multline}
The outer two summations in the last expression compute an average over final times $t_f$ and the expectation is computed with respect to the random vector $\eta$.
The inner summation in the last expression computes an average over the horizon lengths $k$ corresponding to initial times $t = t_f-k$ falling between $0$ and $N$.
The empirical covariance with factor $Y$ produced by \cref{alg:gradient_sampling_long_trajs} corresponds to using a Monte-Carlo method to approximate the outer two summations and the expectation over $\eta$ in the last expression of \cref{eqn:gradient_covariance_from_long_traj}.


\subsection{Sampling from a stationary distribution}
\label{subsec_stationary_distribution}
We consider a statistically stationary case where the distributions of states
and inputs are independent of time.  
More precisely, the distributions of $\big( x(t), u(t\scolon t+L-1) \big)$ and $\big( x_0, u(0\scolon L-1) \big)$ are assumed to be identical for each $t=1, \ldots, N$.
Hence, the distribution for $(x, \bar{u})$ described in the paragraph before \cref{subsec:linear_features} is the same as the distribution for $\big( x_0, u(0\scolon L-1) \big)$.
For example, this situation occurs when the initial conditions $x_0$ are sampled from the invariant distribution on an attractor with zero input, or input provided by state feedback plus independent noise.
Another example is when $x_0$ and $u$ are drawn by choosing an initial time uniformly along a periodic orbit of \cref{eqn:FOM}.
We observe that the stationarity assumption implies that $g_{\eta}(L-k,k)$ and $g_{\eta}(0,k)$ are identically distributed for $k=0, \ldots, L$, yielding
\begin{equation}
    W_g 
    = \E \left[ \frac{1}{L+1} \sum_{\tau=0}^L g_{\eta}(0,\tau) g_{\eta}(0,\tau)^T  \right]
    = \E \left[ \frac{1}{L+1} \sum_{k=0}^L g_{\eta}(L-k,k) g_{\eta}(L-k,k)^T   \right].
\end{equation}
Here the expectation is taken over the initial condition $x_0$, the input sequence $u(0:L-1)$ and the random vector $\eta$.

To construct a Monte-Carlo approximation of the gradient covariance, we first sample ``mini-trajectories'' $x_i(0\scolon L)$, $i=1, \ldots, s_g$ of the system.
For each, we choose $\eta_i$ independently at random and solve \cref{eqn:adjoint_equation}, yielding the columns of a matrix
\begin{equation}
    Y_i = \frac{1}{\sqrt{L+1}} 
    \begin{bmatrix}
        g_{\eta_i}(L,0) & g_{\eta_i}(L-1,1) & \cdots & g_{\eta_i}(0,L)
    \end{bmatrix}.
\end{equation}
The empirical gradient covariance for the collection of mini-trajectories can then be factored using
\begin{equation}
    Y = \frac{1}{\sqrt{s_g}} \begin{bmatrix}
        Y_1 & \cdots & Y_{s_g}
    \end{bmatrix},
    \label{eqn:gradient_covariance_factor_multiple_minitrajectories}
\end{equation}
where no intermediate gradient samples have been wasted.

\section{Results}

\subsection{A challenging model problem}
We consider the same problem proposed in \cite{Otto2022optimizing}, where we seek two-dimensional Petrov-Galerkin ROMs for the system
\begin{equation}
\begin{split}
    \dot{x}_1 &= -x_1 + 20 x_1 x_3 + u \\
    \dot{x}_2 &= -2 x_2 + 20 x_2 x_3 + u \\
    \dot{x}_3 &= -5 x_3 + u \\
    y &= x_1 + x_2 + x_3.
\end{split}
\label{eqn:toy_model}
\end{equation}
As discussed in \cite{Otto2022optimizing}, model reduction for this system is challenging due to its strong nonlinear interactions involving the state $x_3$, which has small variance compared to $x_1$ and $x_2$.
In \cite{Otto2022optimizing} the projections found by various
methods mentioned in \cref{sec:introduction} (namely, POD, BT, QB BT, and
QB-IRKA) were compared to those found by a newly proposed method: trajectory-based optimization
for oblique projections (TrOOP).
TrOOP is an iterative method that uses gradient descent to find a
Petrov-Galerkin projection that minimizes error along a collection of training trajectories.
Here, we add CoBRAS to this list using precisely the same setup.
The training data consisted of the two trajectories shown in \cref{fig:toy_model_training_trajectories} sampled every $\Delta t = 0.5$.
These are nonlinear impulse-responses generated by simulating \cref{eqn:toy_model} with zero input and initial conditions $x(0) = (u_0,u_0,u_0)$ with magnitudes $u_0 = 0.5$ and $u_0 = 1.0$.
The covariance matrices for CoBRAS were defined using $L=5$ of these intervals as the horizon length and initial conditions uniformly distributed over the $22$ sample points.

The prediction accuracy of the resulting ROMs on $100$ impulse responses with magnitudes $u_0$ drawn uniformly from the interval $[0,1]$ is shown in \cref{fig:toy_model_testing_performance}.
Performance on a trajectory with sinusoidal input $u(t) = \sin(t)$ is also shown in \cref{fig:toy_model_sinusoidal}.
We observe that CoBRAS achieves prediction accuracy comparable to the optimized projection found by TrOOP.
The performance is also not very sensitive to the choice of horizon length, with comparable prediction accuracy observed when $L \geq 4$.

\begin{figure}
    \centering
    \subfloat[training trajectories\label{fig:toy_model_training_trajectories}]{
    \begin{tikzonimage}[trim=20 10 40 20, clip=true, width=0.455\textwidth]{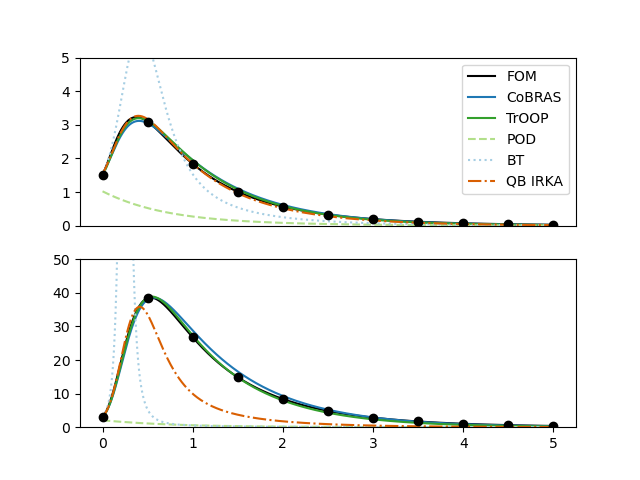}
    \node[rotate=90] at (0.0, 0.73) {\footnotesize $y$ for $u_0 = 0.5$};
    \node[rotate=90] at (0.0, 0.285) {\footnotesize $y$ for $u_0 = 1$};
    \node[rotate=0] at (0.55, 0.01) {\footnotesize Time $t$};
    %
    \end{tikzonimage}
    }
    \subfloat[error on testing trajectories\label{fig:toy_model_testing_performance}]{
    \begin{tikzonimage}[trim=10 10 40 20, clip=true, width=0.465\textwidth]{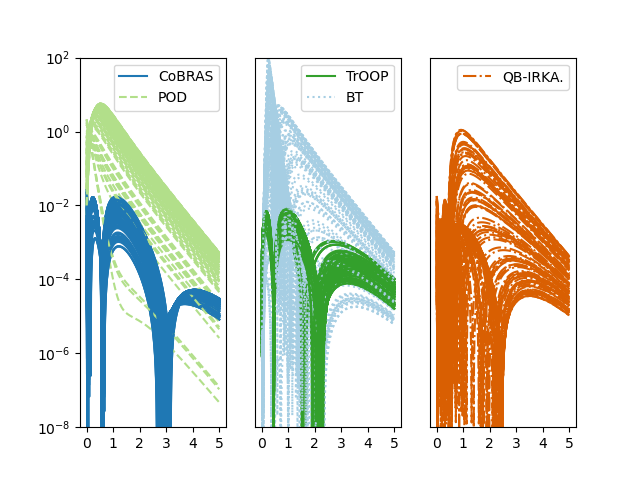}
    \node[rotate=90] at (0.0, 0.5) {\footnotesize $(\hat{y} - y)^2 / \avg{(y^2)}$};
    \node[rotate=0] at (0.245, 0.01) {\footnotesize Time $t$};
    \node[rotate=0] at (0.55,  0.01) {\footnotesize Time $t$};
    \node[rotate=0] at (0.855, 0.01) {\footnotesize Time $t$};
    \end{tikzonimage}
    }
    \caption{In (a) we show the output trajectories of the FOM \cref{eqn:toy_model} and various ROMs along the training trajectories. The samples are indicated by black dots. In (b) we show the normalized square prediction errors for each ROM along 100 trajectories with $u_0$ chosen uniformly at random from the interval $[0,1]$.}
    \label{fig:toy_model_impulse}
\end{figure}

\begin{figure}
    \centering
    \begin{tikzonimage}[trim=40 0 40 10, clip=true, width=0.7\textwidth]{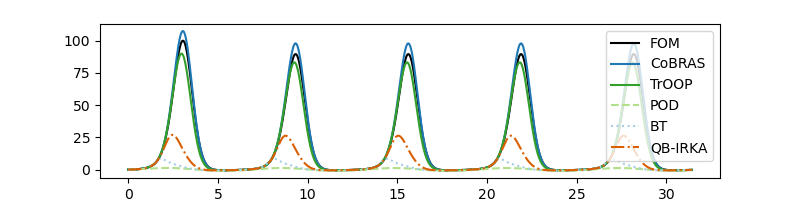}
    \node[rotate=90] at (0.0, 0.5) {\footnotesize Output $y$};
    \node[rotate=0] at (0.5, -0.05) {\footnotesize Time $t$};
    \end{tikzonimage}
    \caption{We show the output trajectory of the FOM \cref{eqn:toy_model} and the predicted outputs of each ROM with zero initial condition and input $u(t) = \sin(t)$.}
    \label{fig:toy_model_sinusoidal}
\end{figure}

\subsection{Nonlinear axisymmetric jet flow}
\label{sec:jet_flow}
We consider the same nonlinear jet flow problem discussed in \cite{Otto2022optimizing} at a higher Reynolds number $Re=2000$.
This system is governed by a discretization of the axisymmetric incompressible Navier-Stokes equations with $10^5$ states.
The flow is sensitive to disturbances introduced upstream in the shear layer about a stable equilibrium.
We observe the full state of the system given by the equilibrium-subtracted velocity field scaled by a radially-dependent weight.
The weight is chosen so that the inner product on the state space is Euclidean
and the squared norm of a state equals the kinetic energy of the corresponding equilibrium-subtracted flowfield.
Input is provided to the system via a source term in the radial momentum equation concentrated in the shear layer near a radius $0.5$ and down-stream distance from the nozzle $1.0$, as indicated in \cref{fig:Galerkin_snap_1}.
The resulting FOM can be written as
\begin{equation}
    \dot{x} = f_0(x) + b u, \qquad x(0) = x_0,
    \label{eqn:jet_flow_FOM}
\end{equation}
where $f_0$ is a quadratic function derived from our spatial discretization of the Navier-Stokes equations and $b$ is a vector corresponding to the source term described above.
For more details, see \cite{Otto2022optimizing}.

The training data consists of $12$ nonlinear impulse-response trajectories in which we set $u=0$ and $x_0 = b u_0$, with magnitudes $u_0 = \pm 0.005, \pm 0.02, \pm 0.04, \pm 0.06, \pm 0.08$, and $\pm 0.10$.
The state along each trajectory was sampled $100$ times at intervals $\Delta t = 0.5$.
The testing data consisted of $25$ such trajectories with impulse magnitudes drawn uniformly at random from the interval $[-0.1, 0.1]$.
The energy along these trajectories, plotted in \cref{fig:jet_training_trajectories}, indicates that the system undergoes rapid and nonlinear transient growth before the disturbances leave the computational domain through the outflow boundary.
\begin{figure}
    \centering
    \subfloat[testing trajectories \label{fig:jet_training_trajectories}]{
    \begin{tikzonimage}[trim=20 10 40 20, clip=true, width=0.45\textwidth]{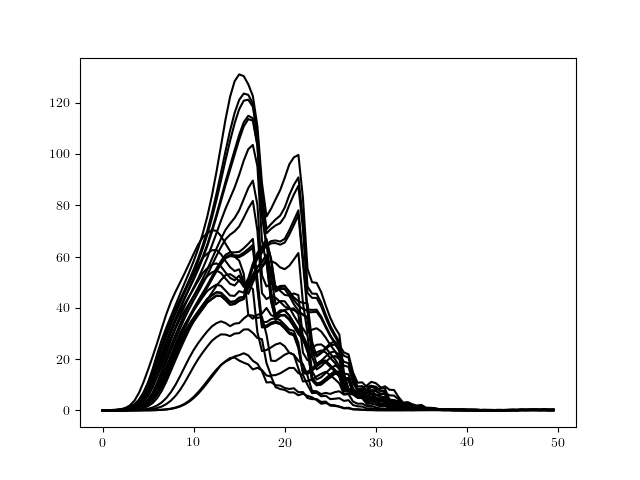}
    \node[rotate=90] at (-0.01, 0.5) {\footnotesize Kinetic energy $\Vert x \Vert^2$};
    \node[rotate=0] at (0.55, 0.01) {\footnotesize Time $t$};
    \end{tikzonimage}
    }
    \subfloat[TrOOP convergence \label{fig:TrOOP_convergence}]{
    \begin{tikzonimage}[trim=0 5 40 10, clip=true, width=0.45\textwidth]{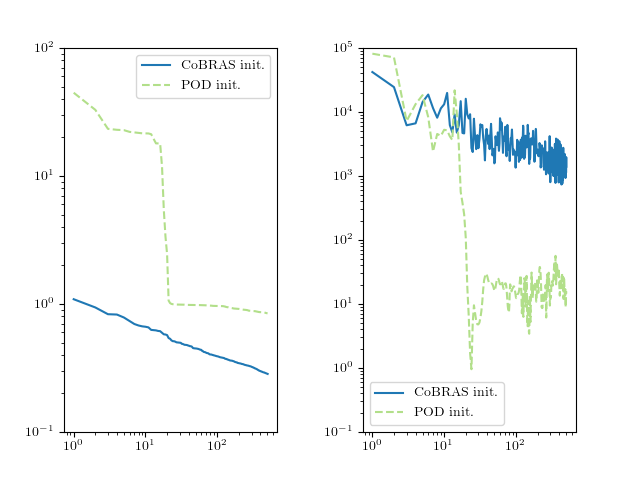}
    \node[rotate=90] at (0.01, 0.5) {\footnotesize TrOOP cost $J(P)$};
    \node[rotate=0] at (0.30, 0.01) {\footnotesize iteration};
    \node[rotate=90] at (0.525, 0.5) {\footnotesize $\Vert \grad J(P) \Vert$};
    \node[rotate=0] at (0.80, 0.01) {\footnotesize iteration};
    \end{tikzonimage}
    }
    \caption{In (a) we show the energy of the jet flow along the testing impulse-response trajectories. In (b) we compare the convergence of TrOOP for the jet flow using POD and CoBRAS for initialization and the cost function \cref{eqn:TrOOP_cost_fun}.}
    \label{fig:jet_energy_and_TrOOP_convergence}
\end{figure}

\subsubsection{Petrov-Galerkin models}
\label{subsubsec:jet_Galerkin}
We construct Petrov-Galerkin ROMs for the jet flow \cref{eqn:jet_flow_FOM} using $r=40$-dimensional projection operators $P = \Phi \Psi^T$ determined by POD, BPOD about the equilibrium, CoBRAS, and TrOOP.
Specifically, the state forecasts $\hat{x}(t) = \Phi z(t)$ were obtained by simulating the $40$-dimensional system
\begin{equation}
    \dot{z} = \tilde{f}(z,u) := \Psi^T f_0(\Phi z) + \Psi^T b u, \qquad z(0) = \Psi^T x_0,
    \label{eqn:jet_flow_Galerkin_ROM}
\end{equation}
where $\tilde{f}$ was evaluated using pre-computed tensors assembled from the quadratic function $f_0$ and the columns of $\Phi$ and $\Psi$ as in Section~4.2 of \cite{Holmes2012turbulence}.
While wall-clock timing is implementation dependent, our simulations of \cref{eqn:jet_flow_Galerkin_ROM} took $\sim 0.5$ seconds per trajectory of length $100\Delta t$ on a laptop computer, whereas simulating \cref{eqn:jet_flow_FOM} took $\sim 180$ seconds per trajectory.

The POD projection was computed using the snapshots from the training trajectories.
We computed the BPOD projection using a time horizon equal to the length of each training trajectory and $20$-dimensional output projection.
For CoBRAS, the gradient was sampled along the training trajectories using the method described in \cref{subsec_long_trajectories} with a time horizon consisting of $L = 40$ intervals of length $\Delta t$ and $N=100$ (we simulated an extra $40\Delta t$ at the end of each training trajectory).
We solved the adjoint equation $10$ times per trajectory, yielding $s_g = 120$.
Per the discussion at the end of \cref{subsec_long_trajectories}, the computational cost to obtain the gradient samples was comparable to simulating the FOM \cref{eqn:jet_flow_FOM} over a time horizon equal to $s_g \cdot L \cdot \Delta t = 4800 \Delta t$. 

The CoBRAS projection was used to initialize the gradient descent in TrOOP with the same cost function for the jet flow described in \cite{Otto2022optimizing}.
This cost function is given by
\begin{equation}
    J(P) 
    = \frac{1}{12} \sum_{k=1}^{12} \frac{\sum_{l=0}^{99} \Vert x^{(k)}(l\Delta t) - \hat{x}^{(k)}(l\Delta t; P) \Vert^2}{\sum_{l=0}^{99} \Vert x^{(k)}(l\Delta t) \Vert^2}
    + \gamma \rho(P),
    \label{eqn:TrOOP_cost_fun}
\end{equation}
where $\hat{x}^{(k)}$ denotes the projection-dependent state forecast of \cref{eqn:jet_flow_Galerkin_ROM} corresponding to the $k$th training trajectory $x^{(k)}$.
The regularization function $\rho$ is defined in \cite{Otto2022optimizing} and we use the same regularization strength $\gamma = 10^{-3}$.
At each step of the geometric conjugate gradient method (Algorithm~4.2 in \cite{Otto2022optimizing}), line search was carried out in order to satisfy the weak Wolfe conditions with $c_1 = 0.01$ and $c_2 = 0.1$.
The gradient was computed using the adjoint sensitivity method described by Algorithm~4.1 in \cite{Otto2022optimizing} with $q=3$ Gauss-Legendre quadrature points per sampling interval.
The convergence of TrOOP with initialization provided by CoBRAS and POD are compared in \cref{fig:TrOOP_convergence}.
In both cases, we performed $500$ iterations.
We observe that after roughly $20$ iterations with POD initialization, the gradient rapidly decreases by thee orders of magnitude and the cost remains approximately constant,
indicating that TrOOP has become stuck near a local minimum of the cost function.
A much better initialization is provided by CoBRAS, which allows TrOOP to reach a lower value of the cost. 

\begin{figure}
    \centering
    \subfloat[projection error\label{fig:jet_Galerkin_rec_performance}]{
    \begin{tikzonimage}[trim=10 10 40 20, clip=true, width=0.450\textwidth]{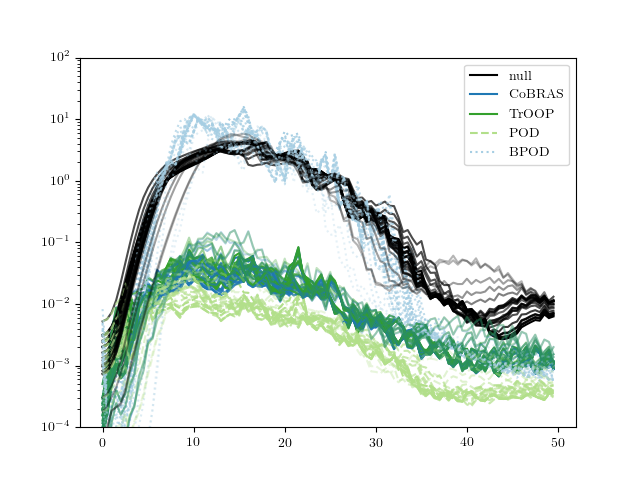}
    \node[rotate=90] at (0.0, 0.5) {\footnotesize $\Vert x - P x \Vert^2 / \avg{\big(\Vert x \Vert^2\big)}$};
    \node[rotate=0] at (0.55, 0.01) {\footnotesize Time $t$};
    \end{tikzonimage}
    }
    \subfloat[forecasting error \label{fig:jet_Galerkin_performance}]{
    \begin{tikzonimage}[trim=10 10 40 20, clip=true, width=0.450\textwidth]{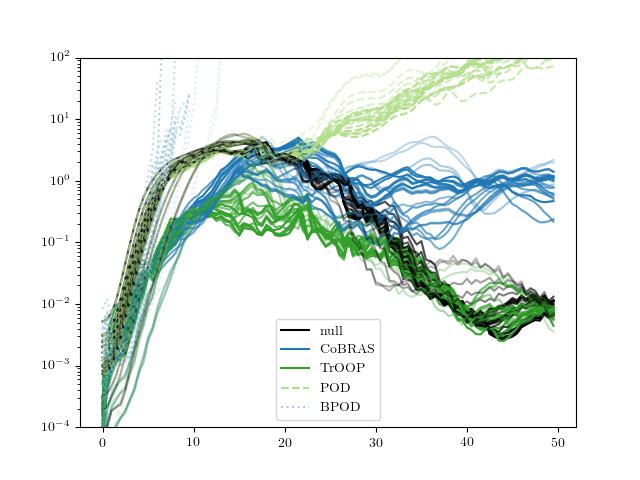}
    \node[rotate=90] at (0.0, 0.5) {\footnotesize $\Vert \hat{x} - x \Vert^2 / \avg{\big(\Vert x \Vert^2\big)}$};
    \node[rotate=0] at (0.55, 0.01) {\footnotesize Time $t$};
    \end{tikzonimage}
    }
    \caption{In (a) we show the reconstruction error using the different linear projections along each testing trajectory. The ``null'' projection means $Px = 0$. In (b) we show the error of the predictions made by the Petrov-Galerkin ROMs. The ``null'' forecast means $\hat{x} = 0$. In (a) and (b) we normalize the errors by the mean kinetic energy along each trajectory. The opacity of the trajectories increases with $\avg{\big(\Vert x \Vert^2\big)}$.}
    \label{fig:jet_impulse}
\end{figure}

The error introduced by each projection in reconstructing the states along the testing trajectories is plotted in \cref{fig:jet_Galerkin_rec_performance}.
Unsurprisingly, the POD projection has the lowest square error at nearly all times.
On the other hand, BPOD cannot accurately reconstruct states that depart from the responses of the linearized system.
Both CoBRAS and TrOOP have larger reconstruction error than POD, except at very early times.
Until roughly $t=5$, the reconstruction error for CoBRAS and TrOOP is lower than POD.

We compare the performance of the ROMs in terms of forecasting error normalized by the mean kinetic energy along each testing trajectory in \cref{fig:jet_Galerkin_performance}.
Predicted flowfield snapshots along the most energetic testing trajectory are shown in \cref{fig:Galerkin_snapshots}.
We observe that the POD-based model does not capture the growth of the disturbance, likely because the most energetic POD modes were primarily supported downstream.
On the other hand the BPOD-based model captures the initial growth, but rapidly blows up as nonlinearities become significant.
Between these extremes is CoBRAS, which captures the initial growth and has physically plausible, yet quantitatively inaccurate behavior at later times.
Though CoBRAS provided quantitatively accurate forecasts for only a short time, it produced a suitable initial projection for TrOOP, which was able to significantly reduce the error at later times.

Similar and slightly improved results were also obtained using CoBRAS with longer gradient sampling horizons $L$ and more gradient samples.
The performance of POD and BPOD was largely independent of the model dimension, while the dimension had a significant effect on the performance of CoBRAS.
Certain model dimensions consistently produced accurate predictions regardless of the gradient sampling strategy, while others produced models that blew up on some of the larger trajectories after correctly predicting the initial growth of the disturbance up to $t \approx 15$.

\begin{figure}
    \centering
    \subfloat[$t=5$\label{fig:Galerkin_snap_1}]{
    \begin{tikzonimage}[trim=50 15 40 40, clip=true, width=0.29\textwidth]{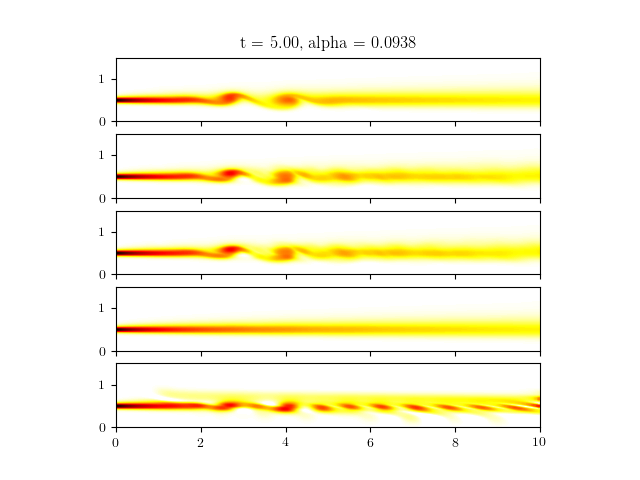}
    \node[rotate=0] at (0.5, 0.0) {\footnotesize axial};
    \node[rotate=90] at (-0.025,0.55) {\footnotesize radial};
    \node at (0.75,0.95) {\scriptsize FOM};
    \node at (0.75,0.76) {\scriptsize CoBRAS};
    \node at (0.75,0.57) {\scriptsize TrOOP};
    \node at (0.75,0.38) {\scriptsize POD};
    \node at (0.75,0.20) {\scriptsize BPOD};
    \filldraw[black] (0.175,0.887) circle (0.75pt);
    \node[anchor=west] at (0.24,0.945) (nodeB) {\scriptsize forcing};
    \draw [->] (0.25,0.945) -- (.19,0.898);
    \end{tikzonimage}
    }
    \subfloat[$t=10$\label{fig:Galerkin_snap_2}]{
    \begin{tikzonimage}[trim=50 15 40 40, clip=true, width=0.29\textwidth]{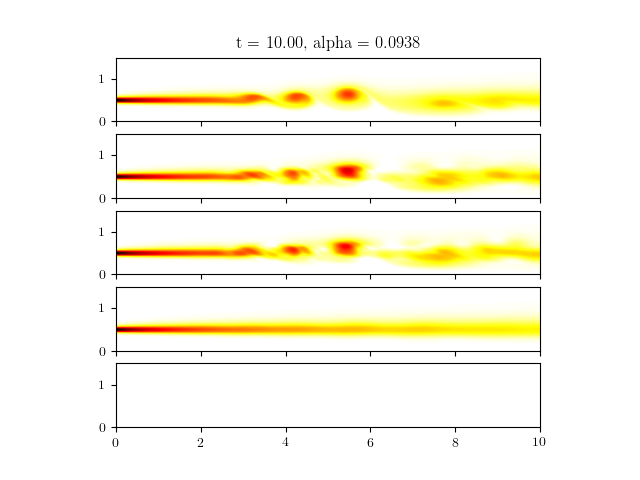}
    \node[rotate=0] at (0.5, 0.0) {\footnotesize axial};
    \end{tikzonimage}
    }
    \subfloat[$t=15$\label{fig:Galerkin_snap_3}]{
    \begin{tikzonimage}[trim=50 15 40 40, clip=true, width=0.29\textwidth]{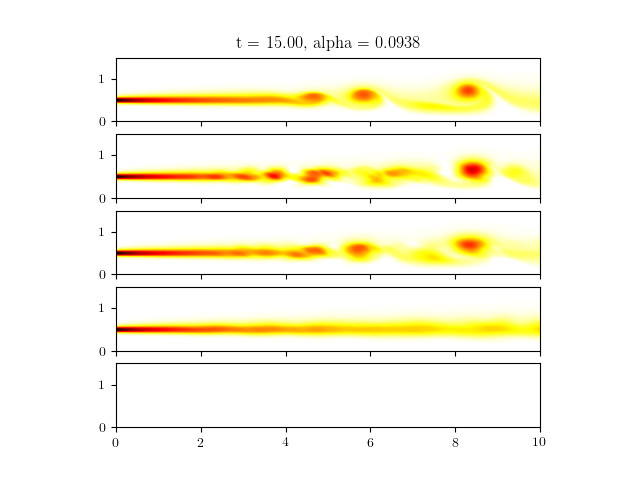}
    \node[rotate=0] at (0.5, 0.0) {\footnotesize axial};
    \end{tikzonimage}
    } \\
    \subfloat[$t=20$\label{fig:Galerkin_snap_4}]{
    \begin{tikzonimage}[trim=50 15 40 40, clip=true, width=0.29\textwidth]{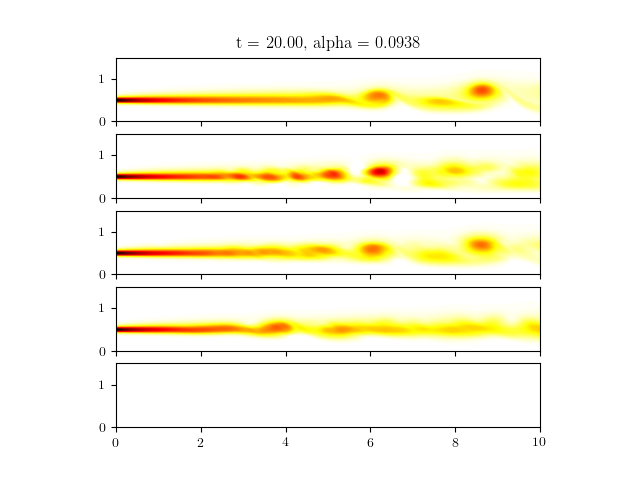}
    \node[rotate=0] at (0.5, 0.0) {\footnotesize axial};
    \node[rotate=90] at (-0.025,0.55) {\footnotesize radial};
    \end{tikzonimage}
    }
    \subfloat[$t=30$\label{fig:Galerkin_snap_5}]{
    \begin{tikzonimage}[trim=50 15 40 40, clip=true, width=0.29\textwidth]{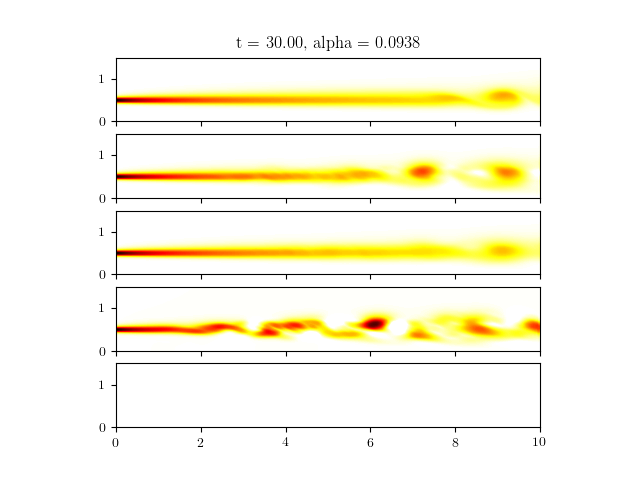}
    \node[rotate=0] at (0.5, 0.0) {\footnotesize axial};
    \end{tikzonimage}
    }
    \subfloat[$t=40$\label{fig:Galerkin_snap_6}]{
    \begin{tikzonimage}[trim=50 15 40 40, clip=true, width=0.29\textwidth]{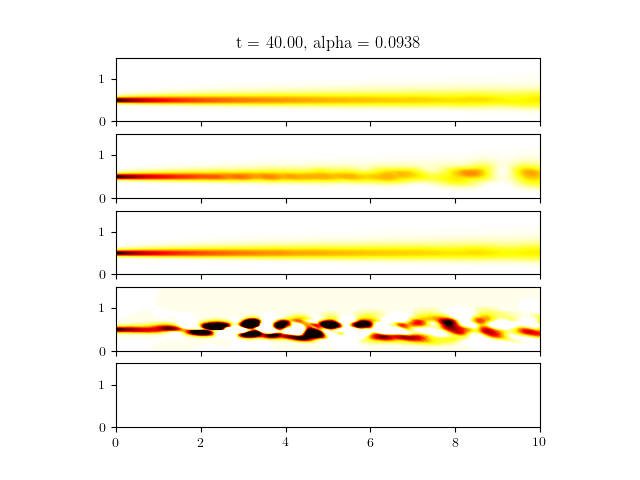}
    \node[rotate=0] at (0.5, 0.0) {\footnotesize axial};
    \end{tikzonimage}
    }
    \caption{Snapshots of predicted vorticity in the jet flow using
      40-dimensional Petrov-Galerkin ROMs along the most energetic testing trajectory, which had impulse magnitude $u_0 = 0.0938$.}
    \label{fig:Galerkin_snapshots}
\end{figure}

\subsubsection{Learned models in nonlinear feature space}
In this section we compare $15$ and $30$-dimensional ROMs for the jet flow in the spaces of nonlinear coordinates found by KPCA and the kernel CoBRAS (K-CoBRAS) method described in \cref{subsec:kernel_CoBRAS}.
For both methods we use the Gaussian kernel with width $\sigma = 8.0$ (see the third row in \cref{tab:common_kernels}).
The same gradient samples described above in \cref{subsubsec:jet_Galerkin} were used to construct the K-CoBRAS embedding. 
To preserve the location of the equilibrium, the KPCA features were centered about the origin by projecting onto the leading eigenvectors of the empirical state covariance operator \cref{eqn:lifted_state_covariance}.

The training trajectories described at the beginning of \cref{sec:jet_flow} are plotted in the resulting nonlinear feature spaces $z = h(x)$ using the leading three KPCA coordinates in \cref{fig:KPOD_embedding} and the leading three K-CoBRAS coordinates in \cref{fig:KCoBRAS_embedding}.
To make these plots, we interpolated the embedded data using piece-wise cubic polynomials and derivative information $\dot{z} = \D h(x) \dot{x}$ obtained at each sample point using the kernel tangent map \cref{eqn:kernel_embedding_tangent_map} and the FOM.
While the KPCA coordinates reflect the energetic growth and decay along trajectories, the points at early and late times are mapped very closely to the equilibrium point at the origin.
On the other hand, K-CoBRAS relies on sensitivity information and consequently maps the initial conditions to points far away from the origin.
The resulting trajectories spiral inward towards the stable equilibrium point.
The leading K-CoBRAS coordinates primarily capture the initial growth of the disturbance in an upstream region, with at least $r=10$ coordinates being necessary to represent the behavior of downstream vortices.

\begin{figure}
    \centering
    \subfloat[KPCA embedding\label{fig:KPOD_embedding}]{
    \begin{tikzonimage}[trim=110 30 65 60, clip=true, width=0.35\textwidth]{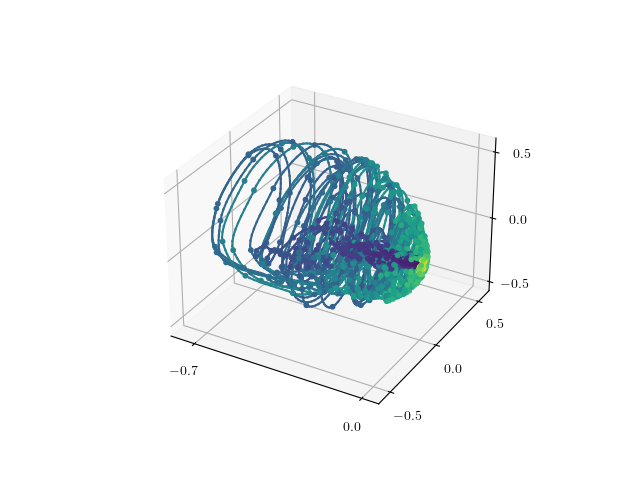}
        \node[rotate=0] at (0.25, 0.10) {\footnotesize $z_1$};
        \node[rotate=0] at (0.85, 0.20) {\footnotesize $z_2$};
        \node[rotate=0] at (1.00, 0.60) {\footnotesize $z_3$};
    \end{tikzonimage}
    }
    \hspace{0.1\textwidth}
    \subfloat[K-CoBRAS embedding\label{fig:KCoBRAS_embedding}]{
    \begin{tikzonimage}[trim=110 30 65 60, clip=true, width=0.35\textwidth]{Figures/jet_Re2000/train_trajs_KCoBRAS_embedding.png}
        \node[rotate=0] at (0.09, 0.19) {\footnotesize $z_1$};
        \node[rotate=0] at (0.71, 0.09) {\footnotesize $z_2$};
        \node[rotate=0] at (1.00, 0.50) {\footnotesize $z_3$};
    \end{tikzonimage}
    }
    \caption{In (a) and (b) we show the training trajectories of the jet flow in the leading three nonlinear coordinates found by KPCA and K-CoBRAS respectively. The coloring is from blue to yellow according to time.}
    \label{fig:kernel_embedding}
\end{figure}

To construct ROMs using the nonlinear coordinates $z = h(x)$ extracted by KPCA and K-CoBRAS, we employed kernel ridge regression (KRR) to fit the discrete-time dynamics $\tilde{f}$ and reconstruction map $\tilde{g}$ to the training data described at the beginning of \cref{sec:jet_flow}.
This yields a model in the form of \cref{eqn:ROM} where $\tilde{f}$ approximates the map $z(t) \mapsto z(t + \Delta t)$ and $\tilde{g}$ approximates the reconstruction map $z \mapsto x$.
Since it would be impractical to fit a reconstruction function for each of the $10^5$ state variables in the jet flow, we used the (nonlinear) KPCA and K-CoBRAS coordinates to reconstruct the leading $100$ (linear) POD and CoBRAS coordinates respectively.
The full state was then linearly reconstructed in the $100$-dimensional POD and CoBRAS bases.
Gaussian radial basis function (RBF) kernels were used for regression with the parameters listed in \cref{tab:KRR_params} of \cref{app:kernel_params} selected by $5$-fold cross-validation over parameter grids.
The fitting process was carried out using the ``KernelRidge'' and ``GridSearchCV'' tools in Scikit-Learn \cite{scikit-learn} after normalizing the variance of each coordinate.
Though wall-clock timing is implementation dependent, our forecasts took $\sim 2$ seconds per trajectory using the $15$ and $30$-dimensional K-CoBRAS models on a laptop computer.
Forecasts using the $15$ and $30$-dimensional KPCA models took $\sim 5$ and $\sim 13$ seconds per trajectory, respectively.

The prediction accuracy of the resulting data-driven ROMs on the testing trajectories is shown in \cref{fig:kernel_performance}.
The models built in K-CoBRAS coordinates have superior prediction accuracy at nearly all times, with accuracy increasing with the number of coordinates in the model.
On the other hand, the models in KPCA coordinates are no better than the null forecast, though their reconstruction error is lower than the K-CoBRAS models after roughly $t=10$.
The K-CoBRAS reconstructions are more accurate at early times, allowing these models to capture the initial growth of disturbances in the upstream region of the flow.
Examining the predicted snapshots in \cref{fig:kernel_KR_snapshots} using the $15$-dimensional models along the most energetic testing trajectory 
confirms that the model in KPCA coordinates fails to capture the upstream disturbances, yielding inaccurate forecats.
On the other hand, the fitted models in K-CoBRAS coordinates accurately predict the flow's response even at late times.

\begin{figure}
    \centering
    \subfloat[$r=15$ reconstruction and forecasting \label{fig:kernel_dim15_error}]{
    \begin{tikzonimage}[trim=20 10 40 20, clip=true, width=0.45\textwidth]{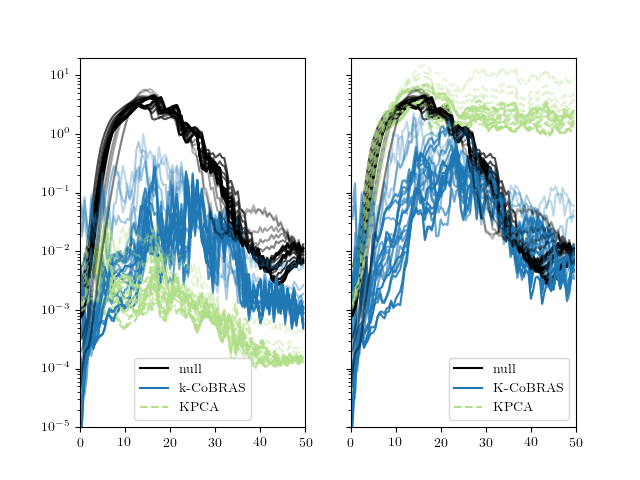}
        \node[rotate=90] at (0.0, 0.510) {\footnotesize $\Vert x - \tilde{g}(h(x)) \Vert^2 / \avg{\big(\Vert x \Vert^2\big)}$};
        \node[rotate=0] at (0.31, 0.00) {\footnotesize Time $t$};
        \node[rotate=90] at (0.54, 0.510) {\footnotesize $\Vert \hat{x} - x \Vert^2 / \avg{\big(\Vert x \Vert^2\big)}$};
        \node[rotate=0] at (0.79, 0.00) {\footnotesize Time $t$};
    \end{tikzonimage}
    }
    \subfloat[$r=30$ reconstruction and forecasting \label{fig:kernel_dim30_error}]{
    \begin{tikzonimage}[trim=20 10 40 20, clip=true, width=0.45\textwidth]{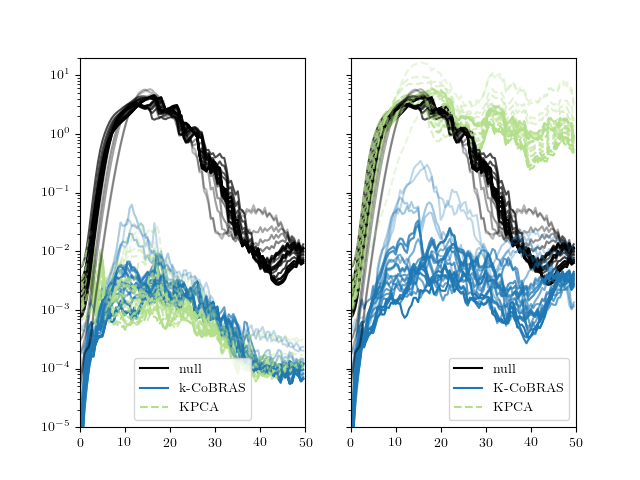}
        \node[rotate=90] at (0.0, 0.510) {\footnotesize $\Vert x - \tilde{g}(h(x)) \Vert^2 / \avg{\big(\Vert x \Vert^2\big)}$};
        \node[rotate=0] at (0.31, 0.00) {\footnotesize Time $t$};
        \node[rotate=90] at (0.54, 0.510) {\footnotesize $\Vert \hat{x} - x \Vert^2 / \avg{\big(\Vert x \Vert^2\big)}$};
        \node[rotate=0] at (0.79, 0.00) {\footnotesize Time $t$};
    \end{tikzonimage}
    }
    \caption{We show the reconstruction and forecasting error along testing trajectories using the $15$ and $30$-dimensional models of the jet flow constructed in KPCA and K-CoBRAS coordinates. For the ``null'' predictions, we reconstruct/forecast zero. Opacity of the curves increases with $\avg{\big(\Vert x \Vert^2\big)}$.}
    \label{fig:kernel_performance}
\end{figure}

\begin{figure}
    \centering
    \subfloat[$t=5$\label{fig:kernel_KR_snap_1}]{
    \begin{tikzonimage}[trim=50 20 40 50, clip=true, width=0.29\textwidth]{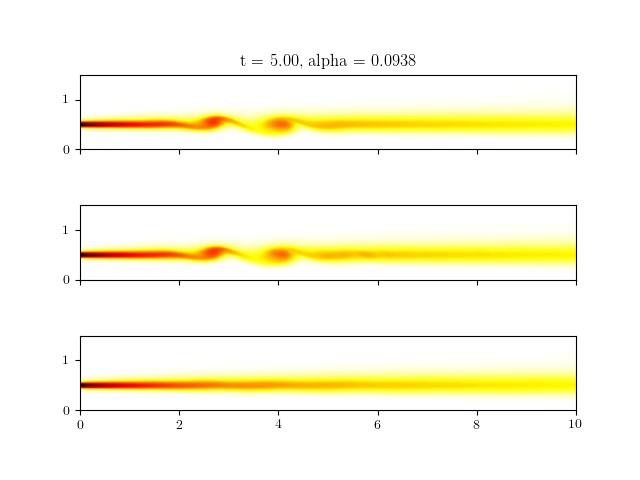}
    \node[rotate=0] at (0.5, 0.0) {\footnotesize axial};
    \node[rotate=90] at (-0.05,0.54) {\footnotesize radial};
    \node at (0.75,0.93) {\scriptsize FOM};
    \node at (0.75,0.59) {\scriptsize K-CoBRAS};
    \node at (0.75,0.25) {\scriptsize KPCA};
    \end{tikzonimage}
    }
    \subfloat[$t=10$\label{fig:kernel_KR_snap_2}]{
    \begin{tikzonimage}[trim=50 20 40 50, clip=true, width=0.29\textwidth]{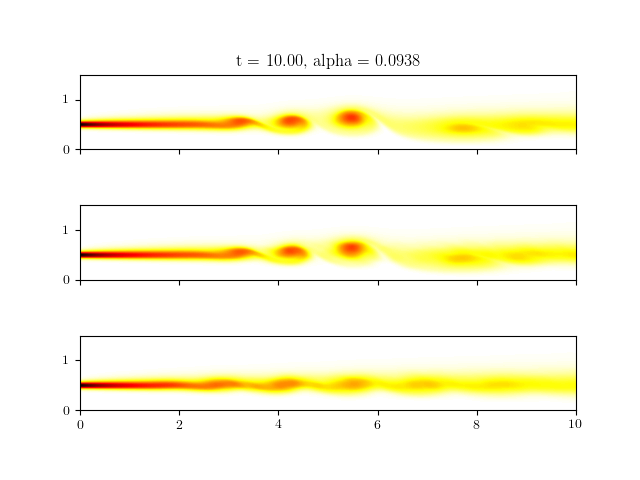}
    \node[rotate=0] at (0.5, 0.0) {\footnotesize axial};
    \end{tikzonimage}
    }
    \subfloat[$t=15$\label{fig:kernel_KR_snap_3}]{
    \begin{tikzonimage}[trim=50 20 40 50, clip=true, width=0.29\textwidth]{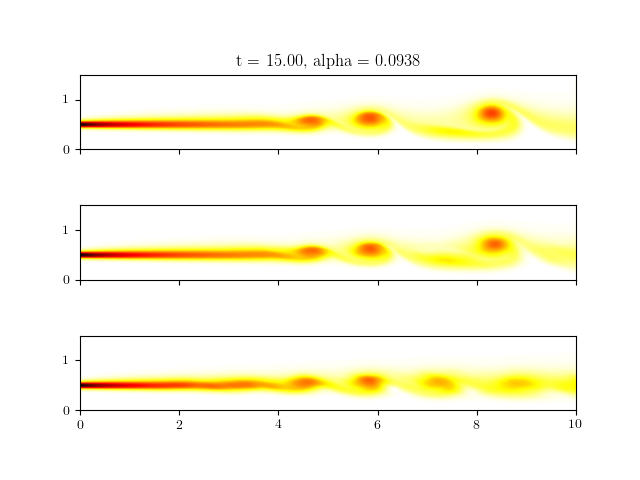}
    \node[rotate=0] at (0.5, 0.0) {\footnotesize axial};
    \end{tikzonimage}
    } \\
    \subfloat[$t=20$\label{fig:kernel_KR_snap_4}]{
    \begin{tikzonimage}[trim=50 20 40 50, clip=true, width=0.29\textwidth]{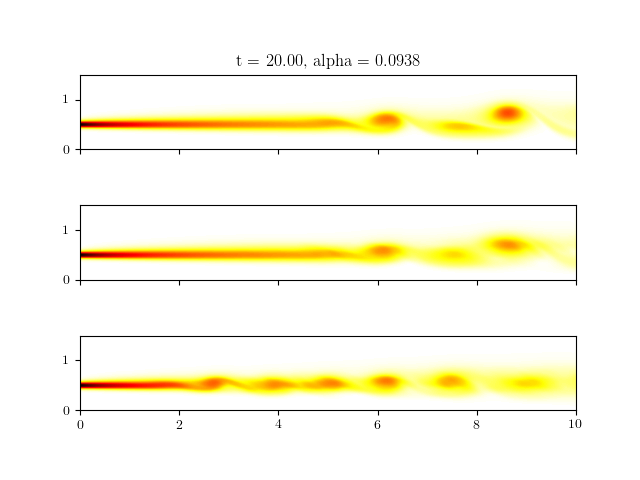}
    \node[rotate=0] at (0.5, 0.0) {\footnotesize axial};
    \node[rotate=90] at (-0.05,0.54) {\footnotesize radial};
    \end{tikzonimage}
    }
    \subfloat[$t=30$\label{fig:kernel_KR_snap_5}]{
    \begin{tikzonimage}[trim=50 20 40 50, clip=true, width=0.29\textwidth]{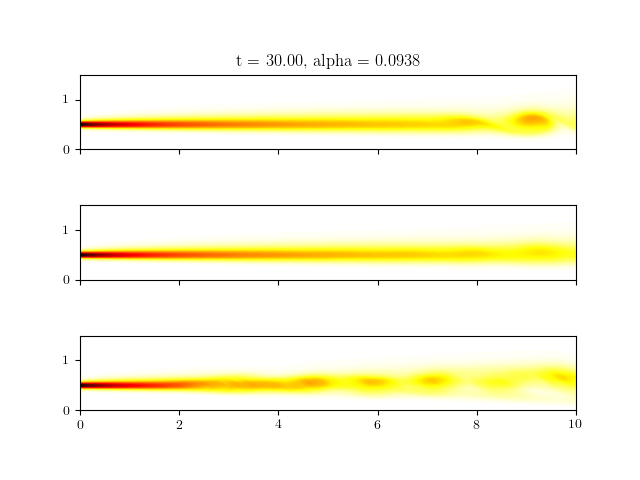}
    \node[rotate=0] at (0.5, 0.0) {\footnotesize axial};
    \end{tikzonimage}
    }
    \subfloat[$t=40$\label{fig:kernel_KR_snap_6}]{
    \begin{tikzonimage}[trim=50 20 40 50, clip=true, width=0.29\textwidth]{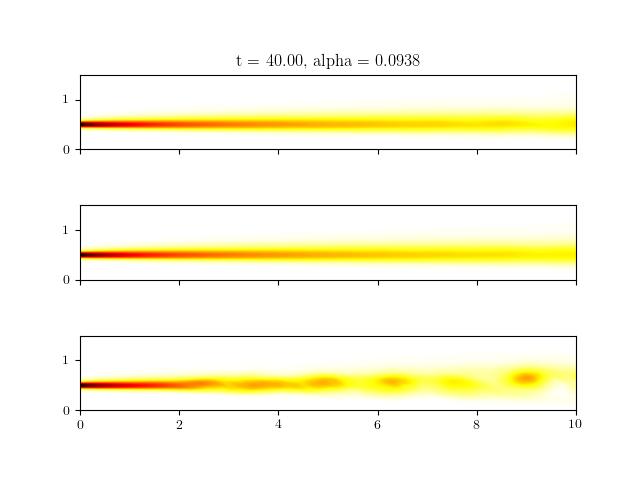}
    \node[rotate=0] at (0.5, 0.0) {\footnotesize axial};
    \end{tikzonimage}
    }
    \caption{Snapshots of predicted vorticity in the jet flow using learned
      15-dimensional models in kernel feature space along the most energetic testing trajectory, which had impulse magnitude $u_0 = 0.0938$.}
    \label{fig:kernel_KR_snapshots}
\end{figure}

\section{Conclusion}

We have introduced covariance balancing reduction using adjoint snapshots (CoBRAS) as a method for finding low-dimensional coordinates for model reduction of nonlinear dynamical systems.
These coordinates capture information about future outputs of the system in the sense of \cite{Zahm2020gradient} by balancing the sensitivity of future outputs against the variance of the distribution of states generated along trajectories, as measured by empirical state and gradient covariance matrices.
The resulting method is analogous to balanced truncation (BT) \cite{Moore1981principal} with covariance matrices replacing the system Gramians and obeying the same transformation laws.
An efficient snapshot-based computational procedure is provided by analogy to balanced proper orthogonal decomposition (BPOD) \cite{Rowley2005model}.
The features extracted by CoBRAS are associated with an oblique projection operator that can be used directly for constructing Petrov-Galerkin models.
We demonstrate the performance of CoBRAS-Galerkin models on a simple, yet challenging three-dimensional system as well as on a jet flow simulation with $10^5$ state variables.
Not only did the CoBRAS-Galerkin models perform well, but they also provided superior initializations compared to POD for the recently proposed method of trajectory-based optimization for oblique projections (TrOOP) \cite{Otto2022optimizing}.

We observe that the CoBRAS features (coordinates) depend only on inner products between state and gradient vectors.
This allows us to replace the inner product with a kernel function in order to identify nonlinear features in a higher-dimensional reproducing kernel Hilbert space (RKHS) using a method we refer to as kernel-CoBRAS (K-CoBRAS).
While the associated projection is now defined in the RKHS and cannot be used for Petrov-Galerkin projection of the full-order model, it is possible to construct ROMs governing the evolution of the extracted nonlinear coordinates by employing regression-based approaches.
We applied this approach to the jet flow and compared the performance of the K-CoBRAS coordinates to kernel principal component analysis (KPCA) coordinates by fitting the discrete-time dynamics and reconstruction maps via kernel ridge regression.
The learned K-CoBRAS model accurately captured the dynamics, while the KPCA model failed to capture the growth of disturbances.
We argue that this likely because KPCA truncated coordinates associated with the low-variance upstream growth, while K-CoBRAS retained these features due to its reliance on sensitivity information.

Our code was written in Python and is available at \url{https://github.com/samotto1/CoBRAS}.


\bibliographystyle{siamplain}
\bibliography{jfull,references}


\appendix

\section{Proof of Theorem~\ref{thm:projection_error_for_Gaussian}}
\label{app:projection_error_for_Gaussian}

By the tower rule for conditional expectation, we have
\begin{equation}
    \E\left[ \big\Vert F(x, \bar{u}) - \hat{F}(P x, \bar{u}) \big\Vert^2 \right]
    = \E \left[ \E\left[ \big\Vert F(x, \bar{u}) - \hat{F}(P x, \bar{u}) \big\Vert^2 \ \big\vert \ \bar{u} \right] \right].
\end{equation}
Letting $W_{g\vert\bar{u}} = \E\big[ \grad_x F(x, \bar{u}) \grad_x F(x, \bar{u})^T \ \vert \ \bar{u} \big]$ and applying Proposition~2.5 in Zahm et al. \cite{Zahm2020gradient} to the conditional expectation yields
\begin{equation}
    \E\left[ \big\Vert F(x, \bar{u}) - \hat{F}(P x, \bar{u}) \big\Vert^2 \right]
    \leq \E\left[ \Tr\left[ W_{g\vert\bar{u}} (I-P) \Sigma_{x\vert\bar{u}} (I-P)^T \right] \right].
\end{equation}
Using our assumption about $C$, it follows that
\begin{equation}
    C W_x - \Sigma_{x\vert\bar{u}}
    = C \underbrace{(W_x - \Sigma_x)}_{\E[x] \E[x]^T} + C \Sigma_x - \Sigma_{x\vert\bar{u}}
    \succeq 0
\end{equation}
is positive semi-definite almost surely.
Applying this result, linearity of the expectation, and the tower rule, we obtain
\begin{align}
    \E\left[ \big\Vert F(x, \bar{u}) - \hat{F}(P x, \bar{u}) \big\Vert^2 \right] 
    \leq & \E\left[ \Tr\left[ W_{g\vert\bar{u}} (I-P) C W_x (I-P)^T \right] \right] \\
    & = C \Tr\left[ \E \left[ W_{g\vert\bar{u}}\right] (I-P) W_x (I-P)^T \right] \\
    & = C \Tr\left[ W_g (I-P) W_x (I-P)^T \right].
\end{align}

It remains to verify that we can take $C=1$ in the jointly Gaussian case.
Thanks to Proposition~3.13 in Eaton \cite{Eaton2007multivariate},
when $(x, \bar{u})$ has Gaussian distribution, the conditional covariance matrix is given explicitly by
\begin{equation}
    \Sigma_{x\vert\bar{u}} 
    = \Sigma_x - \Sigma_{x,\bar{u}} \Sigma_{\bar{u}}^{+} \Sigma_{x,\bar{u}}^T,
\end{equation}
where $\Sigma_{x,\bar{u}} = \E\big[ (x- \E[x]) (\bar{u} - \E[\bar{u}])^T\big]$ and $\Sigma_{\bar{u}}^{+}$ denotes the Moore-Penrose pseudoinverse of $\Sigma_{\bar{u}} = \E\big[ (\bar{u} - \E[\bar{u}]) (\bar{u} - \E[\bar{u}])^T\big]$.
Since $\Sigma_x - \Sigma_{x\vert\bar{u}} = \Sigma_{x,\bar{u}} \Sigma_{\bar{u}}^{+} \Sigma_{\bar{u},x} \succeq 0$ is positive semi-definite, we can take $C=1$.
This completes the proof of the theorem.

\section{Proof of Theorem~\ref{thm:classification_of_data_equivalent_subspaces}}
\label{app:classification_of_data_equivalent_subspaces}

We begin by proving \cref{lem:generic_sampling}, which ensures that almost every matrix $X$ with respect to Lebesgue measure satisfes the hypotheses of the theorem.
\begin{proof}[Proof of \cref{lem:generic_sampling}]
Let $\Phi \in \R^{n\times r}$ have linearly independent columns spanning $\Range(M^T)$ and let $A \in \R^{s\times r}$ be a matrix with linearly independent columns.
The functions $\phi: X \mapsto \det(X^T X)$ and $\psi: X\mapsto \det(\Phi^T X A)$ are non-constant polynomials on the Euclidean space $\R^{n\times s}$.
If follows from the main result in \cite{Caron2005zero} that the level sets of $\phi$ and $\psi$ have zero Lebesgue measure.
The level set $\phi^{-1}(0)$ corresponds to matrices $X \in \R^{n\times s}$ whose columns are linearly dependent.
The level set $\psi^{-1}(0)$ contains every matrix $X \in \R^{n\times s}$ for which $\rank(M X) < r$.
Thus, the union $\phi^{-1}(0) \cup \psi^{-1}(0)$ has Lebesgue measure zero and $\phi^{-1}(0) \cup \psi^{-1}(0)$ contains every $X \in \R^{n\times s}$ failing to have linearly independent columns or failing to satisfy $\rank(M X) = r$.
\end{proof}

We introduce some notation and recall some basic concepts that we use throughout our proof of Theorem~\ref{thm:classification_of_data_equivalent_subspaces}.
If $M$ is a matrix, let $\row_{i_0 : i_1}(M)$ we denote the submatrix of $M$ formed from rows $i_0$ through $i_1$ with $i_1 \geq i_0$.
Similarly, $\col_{j_0:j_1}$ denotes the submatrix formed from columns $j_0$ through $i_1$.
The smooth manifold of $q\times p$ matrices with linearly independent columns is denoted $\R_*^{q\times p}$.
We recall that the canonical projection map $\pi_{q,p}:\R_*^{q\times p} \to \calG_{q,p}$ given by $\pi(\Phi) = \Range(\Phi)$ is a surjective submersion.
This follows from the quotient manifold theorem (Theorem 21.10 in Lee \cite{Lee2013introduction}) and the fact that $\calG_{q,p}$ is the quotient of $\R_*^{q\times p}$ under a free and proper action of the general linear group of invertible $p\times p$ matrices.
Thanks to Theorem~4.29 in \cite{Lee2013introduction}, a function $f$ on $\calG_{q,p}$ is smooth if and only if $f\circ\pi_{q,p}$ is smooth.
We also make use of the following method for constructing smooth functions on the Grassmannian.
Suppose that $\bar{f}$ is a smooth function on $\R_*^{q\times p}$ that is constant on the fibers of $\pi_{q,p}$, i.e., $\bar{f}$ satisfies $\bar{f}(\Psi) = \bar{f}(\Psi A)$ for every $\Psi\in\R_*^{q\times p}$ and invertible $p\times p$ matrix $A$.
Thanks to Theorem~4.30 in \cite{Lee2013introduction} there is a unique smooth function $f$ on $\calG_{q,p}$ satisfying $\bar{f} = f \circ \pi_{q,p}$, that is, $\bar{f}(\Psi) = f(\Range(\Psi))$ for every $\Psi\in\R_*^{q\times p}$.

\begin{lemma}[Grassmannian cross-section, see Absil et al. \cite{Absil2004riemannian}]
    \label{lem:cross_section_lemma}
    Let $1 \leq p \leq q$ be integers.
    For each $M\in \R_*^{q\times p}$, the affine ``cross-section''
    \begin{equation}
        S_M = \left\{ Y \in \R_*^{q\times p} \ : \ M^T(Y - M) = 0 \right\}
    \end{equation}
    is mapped diffeomorphically onto the open subset of the Grassmannian
    \begin{equation}
        \calU_M = \left\{ \Range(Y)\in\calG_{q,p} \ : \ \det(M^T Y) \neq 0 \right\}
    \end{equation}
    by the canonical projection $\pi_{q,p}:Y \mapsto \Range(Y)$.
    The smooth inverse is the ``cross-section mapping'' $\sigma_M: \calU_M \to S_M$ defined by
    \begin{equation}
        \sigma_M : \Range(Y) \mapsto Y (M^T Y)^{-1} M^T M.
    \end{equation}
\end{lemma}
\begin{proof}
    See Absil et al. \cite{Absil2004riemannian}.
\end{proof}
Setting $M = \begin{bmatrix} I_p & 0 \end{bmatrix}^T$ in the above lemma yields the following useful corollary
\begin{corollary}
    \label{cor:Wqp}
    Let $1 \leq p \leq q$ be integers.
    The subset
    \begin{equation}
        \calW_{q,p} = 
        \left\{ \Range(Y)\in\calG_{q,p} \ : \ \det\big(\row_{1:p}(Y)\big) \neq 0 \right\}
    \end{equation}
    is open in the Grassmannian and diffeomorphic to $\R^{(q-p)\times p}$.
    In particular, a diffeomorphism $\phi_{q,p}: \R^{(q-p)\times p} \to \calW_{q,p}$ is given by
    \begin{equation}
        \phi_{q,p}: A \mapsto 
        \Range\left(
        \begin{bmatrix}
            I_p \\
            A
        \end{bmatrix}
        \right).
    \end{equation}
\end{corollary}
\begin{proof}
    Taking $M = \begin{bmatrix} I_p & 0 \end{bmatrix}^T$ in Lemma~\ref{lem:cross_section_lemma}, we observe that the canonical projection $\pi_{q,p}$ maps the cross-section
    \begin{equation}
        S_M = \left\{ Y \in \R_*^{q\times p} \ : \ \row_{1:p}(Y) = I_p \right\}
    \end{equation}
    diffeomorphically onto $\calW_{q,p}$.
    The proof is completed by observing that the cross-section is parametrized by the entries in the submatrix $\row_{(p+1):q}(Y)$.
\end{proof}

To prove Theorem~\ref{thm:classification_of_data_equivalent_subspaces}, we begin by reducing to the case where the columns of $X$ are given by the first $s$ columns of the $n\times n$ identity matrix, denoted $E_{n,s} = \begin{bmatrix} I_s & 0 \end{bmatrix}^T$.
Let $X = Q_X R$ be a reduced QR factorization where $Q_X$ is an $n\times s$ matrix whose columns are an orthonormal basis for the range of $X$ and $R$ is an invertible $s\times s$ matrix.
It follows that $\tilde{F}X = F X$ if and only if $\tilde{F} Q_X = F Q_X$.
Letting $Q$ be an $n\times n$ orthonormal matrix whose first $s$ columns are given by $Q_X$, we have $Q^T Q_X = E_{n,s}$.
It follows that $\tilde{F} Q_X = F Q_X$ if and only if $\tilde{F} Q E_{n,s} = F Q E_{n,s}$.
Since $\Range(\tilde{F}^T) = Q\Range\big((\tilde{F} Q)^T\big)$, we have
\begin{equation}
    \calV_{F,X} = Q \calV_{FQ, E_{n,s}},
\end{equation}
and so it suffices to study $\calV_{FQ, E_{n,s}}$.

We make use of the following characterization of sets $\calV_{F,X}$.
\begin{lemma}
    \label{lem:another_expression_for_V_X}
    Let $F$ and $X$ satisfy the hypotheses of \cref{thm:classification_of_data_equivalent_subspaces}.
    Then
    \begin{equation}
        \calV_{F,X} = \left\{ \Range(\Phi) \ : \ 
        \Phi \in \R^{n\times r} 
        \ \mbox{and} \ 
        \Range\big(X^T \Phi\big) = \Range\big(X^T F^T\big) \right\}.
    \label{eqn:set_of_equivalent_subspaces}
    \end{equation}
\end{lemma}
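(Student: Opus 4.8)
The plan is to prove the two set inclusions separately, after first recording a convenient reformulation of the orthogonal-projection condition on the right-hand side of \cref{eqn:set_of_equivalent_subspaces}. The key auxiliary fact I would establish is that for any subspace $U \in \calG_{n,r}$ and any matrix $W$ with $\Range(W) = U$, one has $\Range(X^T P_U) = \Range(X^T W)$. Indeed, since $P_U W = W$, every column of $X^T W$ lies in $\Range(X^T P_U)$; conversely each column $X^T P_U e_i$ equals $X^T W c_i$, where $c_i$ expresses $P_U e_i \in U$ in the columns of $W$. Applying this with $W = F^T$ (whose columns span $V$) gives $\Range(X^T P_V) = \Range(X^T F^T) = \Range\big((FX)^T\big)$, and applying it with an orthonormal basis $Q$ of a candidate subspace $\tilde V$ gives $\Range(X^T P_{\tilde V}) = \Range(X^T Q)$. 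Thus the right-hand side of \cref{eqn:set_of_equivalent_subspaces} consists exactly of those $\tilde V$ for which the row space of $Q^T X$ equals the row space of $FX$.

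For the inclusion $\calV_X \subseteq \{\,\tilde V : \Range(X^T P_{\tilde V}) = \Range(X^T P_V)\,\}$, I would take $\tilde V = \Range(\tilde F^T)$ for an admissible $\tilde F$ with $\rank(\tilde F) = r$ and $\tilde F X = FX$. The auxiliary fact with $W = \tilde F^T$ gives $\Range(X^T P_{\tilde V}) = \Range(X^T \tilde F^T)$, and then $X^T \tilde F^T = (\tilde F X)^T = (FX)^T = X^T F^T$ identifies this with $\Range(X^T P_V)$ by the computation above. This direction is immediate.

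The reverse inclusion carries the real content. Given $\tilde V \in \calG_{n,r}$ with $\Range(X^T P_{\tilde V}) = \Range(X^T P_V)$, I must exhibit a rank-$r$ matrix $\tilde F$ realizing $\tilde V = \Range(\tilde F^T)$ and satisfying $\tilde F X = FX$. Fixing an orthonormal basis $Q$ of $\tilde V$, I look for $\tilde F = B Q^T$ with $B \in \R^{m\times r}$, so that $\Range(\tilde F^T) = \Range(Q) = \tilde V$ once $B$ has full column rank. The data constraint becomes the linear system $B(Q^T X) = FX$, which is solvable precisely when $\Range\big((FX)^T\big) \subseteq \Range(X^T Q)$; by the reformulation of the hypothesis this containment holds (indeed with equality), so a solution $B$ exists.

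The one point requiring care—and the main obstacle—is the rank of $\tilde F$: I claim any solution $B$ automatically has full column rank, so that $\Range(\tilde F^T)$ fills out all of $\tilde V$ and $\rank(\tilde F) = r$. This is exactly where the hypothesis $\rank(FX) = r$ is essential: from $B(Q^T X) = FX$ we obtain $r = \rank(FX) \le \rank(B) \le r$, forcing $\rank(B) = r$. Consequently $\tilde F = BQ^T$ has rank $r$, satisfies $\Range(\tilde F^T) = \tilde V$ and $\tilde F X = FX$, so $\tilde V \in \calV_X$. Thus the difficulty is not producing a solution $B$, which the row-space identity guarantees, but recognizing that full column rank comes for free from the rank hypothesis on $FX$; without it one could not ensure that $\Range(\tilde F^T)$ is all of $\tilde V$.
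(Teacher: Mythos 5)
Your proof is correct and takes essentially the same route as the paper: the forward inclusion is the same range computation, and your reverse-direction construction $\tilde F = BQ^T$ with $B$ solving $B(Q^T X) = FX$ is just a reparametrization (via an orthonormal basis $Q$ of $\tilde V$) of the paper's $\tilde F = F A P_{\tilde V}$ with $A$ solving $P_V X = A P_{\tilde V} X$. In both arguments the solvability of the linear system comes from the row-space equality, and the crucial rank sandwich $r = \rank(FX) = \rank(\tilde F X) \leq \rank(\tilde F) \leq r$, forced by the hypothesis $\rank(FX) = r$, yields $\rank(\tilde F) = r$ and hence $\Range(\tilde F^T) = \tilde V$ by dimension count.
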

\begin{proof}
    Suppose that there is a matrix $\tilde{F}\in \R^{m\times n}$ with rank $r$ satisfying $\tilde{F}X = FX$. 
    Letting $\tilde{F} = U \Sigma \Phi^T$ be a reduced SVD, we have $\Range(\Phi) = \Range(\tilde{F})$.
    Since $X^T \Phi \Sigma U^T = X^T F^T$, we obtain $\Range\big(X^T F^T\big) \subset \Range\big(X^T \Phi\big)$.
    Since $\Sigma U^T$ is surjective, we conclude that $\Range\big(X^T \Phi\big) = \Range\big(X^T F^T\big)$.

    Conversely, suppose that $\Phi \in \R^{n\times r}$ satisfies $\Range\big(X^T \Phi\big) = \Range\big(X^T F^T\big)$.
    Then there is an $r\times m$ matrix $A$ so that
    \begin{equation}
        X^T \Phi A = X^T F^T.
    \end{equation}
    Since $\rank(F X) = r$, we must also have $\rank(\Phi A) = r$.
    Setting $\tilde{F} = A^T \Phi^T$ we conclude that $\Range(\Phi) = \Range(\tilde{F}^T)$ and $\tilde{F}X = F X$.
\end{proof}
Let $S_0$ be an $s\times s$ permutation matrix so that the first $r$ rows of $S_0^T E_{n,s}^T Q^T F^T$ are linearly independent.
Applying the Lemma~\ref{lem:another_expression_for_V_X} to $\calV_{FQ, E_{n,s}}$, we obtain
\begin{equation}
\begin{aligned}
    \calV_{FQ, E_{n,s}} = 
    \Big\{ \Range(\Phi) \ : \ 
        &\Phi \in \R^{n\times r} 
        \ \mbox{and} \ \\
        &\Range\big(S_0^T E_{n,s}^T \Phi\big) = \Range\big(S_0^T E_{n,s}^T Q^T F^T \big) \Big\}.
\end{aligned}
\end{equation}
We observe that the $n\times n$ permutation matrix
\begin{equation}
    S = \begin{bmatrix}
        S_0 & 0 \\
        0 & I_{n-s}
    \end{bmatrix},
\end{equation}
satisfies $S E_{n,s} = E_{n,s} S_0$, and so
\begin{equation}
\begin{aligned}
    \calV_{FQ, E_{n,s}} = 
    \Big\{ S\Range(S^T\Phi) \ : \ 
        &\Phi \in \R^{n\times r} 
        \ \mbox{and} \ \\
        &\Range\big(E_{n,s}^T S^T \Phi\big) = \Range\big(E_{n,s}^T S^T Q^T F^T \big) \Big\}.
\end{aligned}
\end{equation}
From this obtain $\calV_{FQ, E_{n,s}} = S \calV_{FQS, E_{n,s}}$, further reducing our problem to studying the set $\calV_{FQS, E_{n,s}}$.
To summarize the results so far, we have
\begin{equation}
    \calV_{F,X} 
    = Q \calV_{FQ, E_{n,s}}
    = Q S \calV_{FQS, E_{n,s}},
\end{equation}
where $Q S$ is a unitary transformation.

Let $M \in \R^{s\times r}$ be a matrix with linearly independent columns spanning the $r$-dimensional subspace $\Range\big(E_{n,s}^T S^T Q^T F^T \big)\subset \R^s$.
The set we aim to study can be expressed as
\begin{equation}
    \calV_{FQS, E_{n,s}}
    = \Big\{ \Range(\Phi) \ : \ 
        \Phi \in \R^{n\times r} 
        \ \mbox{and} \ 
        \Range\big( \row_{1:s}(\Phi) \big) = \Range\big( M \big) \Big\}.
    \label{eqn:V_reduced}
\end{equation}
By construction of $S_0$, the submatrix $\row_{1:r}(M)$ is invertible. 
This implies that $\row_{1:r}(\Phi)$ is invertible for every $\Phi\in\R^{n\times r}$ with $\Range(\Phi) = \Range(M)$.
Hence, we have
\begin{equation}
    \calV_{FQS, E_{n,s}} \subset \calW_{n,r},
    \label{eqn:VcontW}
\end{equation}
where $\calW_{n,r}$ is the open submanifold of $\calG_{n,r}$ described in Corollary~\ref{cor:Wqp}.

We construct a diffeomorphism $\calW_{n-s+r,r} \to \calV_{FQS, E_{n,s}}$ by first introducing the map $\bar{f}:\R_*^{(n-s+r)\times r} \to \R_*^{n\times r}$ defined by
\begin{equation}
    \bar{f}:\Psi \mapsto 
    \begin{bmatrix}
        \row_{1:r}(\Psi) \\
        \row_{r+1:s}(M) \row_{1:r}(M)^{-1} \row_{1:r}(\Psi) \\
        \row_{r+1:n-s+r}(\Psi)
    \end{bmatrix}.
    \label{eqn:V_map_lift}
\end{equation}
This map is well-defined because the matrix appearing on the right has $r$ columns and contains all of the rows of $\Psi$, of which $r$ are linearly independent.
Since this is a linear map of matrices, it is obviously smooth and gives rise to a smooth map $\pi_{n,r}\circ \bar{f}:\R_*^{(n-s+r)\times r} \to \calG_{n,r}$.
We observe that $\Range(\bar{f}(\Psi))$ only depends on the range of $\Psi$, for if $A$ is an invertible $r\times r$ matrix, then we have
\begin{equation}
    \Range\big(\bar{f}(\Psi A)\big) = 
    \Range\big(\bar{f}(\Psi) A\big)
    = \Range\big(\bar{f}(\Psi)\big).
\end{equation}
It follows from Theorem~4.30 in \cite{Lee2013introduction} that there is a unique smooth map $f:\calG_{n-s+r, r} \to \calG_{n, r}$ satisfying $\pi_{n,r}\circ\bar{f} = f \circ \pi_{n-s+r,r}$, that is,
\begin{equation}
    f(\Range(\Psi)) = \Range(\bar{f}(\Psi))
\end{equation}
for every $\Psi\in\R_*^{(n-s+r)\times r}$.
Restricting $f$ to the subset $\calW_{n-s+r,r} \subset \calG_{n-s+r, r}$ will provide the desired diffeomorphism onto $\calV_{FQS, E_{n,s}}$ as we now show.

First, we show that $f$ is injective on $\calG_{n-s+r, r}$.
If $f(\Range(\Psi_0)) = f(\Range(\Psi_1))$,
then we have $\Range(\bar{f}(\Psi_0)) = \Range(\bar{f}(\Psi_1))$.
This implies that there is an invertible $r\times r$ matrix $A$ so that $\bar{f}(\Psi_0) = \bar{f}(\Psi_1) A$.
Taking a subset of rows in Eq.~\ref{eqn:V_map_lift} yields $\Psi_0 = \Psi_1 A$.
Therefore, $\Range(\Psi_0) = \Range(\Psi_1)$, proving that $f$ is injective.

Next, we show that $f(\calW_{n-s+r,r}) \subset \calV_{FQS, E_{n,s}}$.
Consider an element $\Range(\Psi) \in \calW_{n-s+r,r}$ with $\Psi\in \R_*^{n-s+r\times r}$.
Since $\row_{1:r}(\Psi)$ is invertible, we have
\begin{equation}
    \Range(\Psi) = \Range(\Psi  \row_{1:r}(\Psi)^{-1} \row_{1:r}(M)).
\end{equation}
Using this, we obtain
\begin{equation}
    f(\Range(\Psi)) = 
    \Range\left(
    \begin{bmatrix}
        \row_{1:r}(M) \\
        \row_{r+1:s}(M) \\
        \row_{r+1:n-s+r}(\Psi) \row_{1:r}(\Psi)^{-1} \row_{1:r}(M)
    \end{bmatrix}
    \right),
\end{equation}
which is obviously an element of $\calV_{FQS, E_{n,s}}$ thanks to Eq.~\ref{eqn:V_reduced}.

Finally, to show that $f$ is a diffeomorphism of $\calW_{n-s+r,r}$ onto $\calV_{FQS, E_{n,s}}$, we construct a smooth inverse map.
The inverse is the restriction to $\calV_{FQS, E_{n,s}} \subset \calW_{n,r}$ (recall Eq.~\ref{eqn:VcontW}) of the smooth map $g:\calW_{n,r} \to \calW_{n-s+r,r}$ defined by
\begin{equation}
    g:
    \underbrace{\Range\left(
    \begin{bmatrix}
        I_r \\
        \row_{1:s-r}(A) \\
        \row_{s-r+1:n-r}(A)
    \end{bmatrix}
    \right)}_{\phi_{n,r}(A)} \mapsto
    \underbrace{\Range\left(
    \begin{bmatrix}
        I_r \\
        \row_{s-r+1:n-r}(A)
    \end{bmatrix}
    \right)}_{\phi_{n-s+r,r}(\row_{s-r+1:n-r}(A))}.
\end{equation}
This map is well-defined and smooth thanks to Corollary~\ref{cor:Wqp}.

To show that $f \circ g$ is the identity on $\calV_{FQS, E_{n,s}}$, let $\Range(\Phi) \in \calV_{FQS, E_{n,s}}$ for an $n\times r$ matrix $\Phi$.
Since $\Range(\row_{1:s}(\Phi)) = \Range(M)$ and $\row_{1:r}(M)$ is invertible, it follows that $\row_{1:r}(\Phi)$ is inverible.
By replacing $\Phi$ with $\Phi \row_{1:r}(\Phi)^{-1}$, we may assume without loss of generality that $\row_{1:r}(\Phi) = I_r$.
With this choice for $\Phi$, we observe that $\row_{1:s}(M) \row_{1:r}(M)^{-1} = \row_{1:s}(\Phi)$.
Using this observation and the definitions of $f$ and $g$, we calculate
\begin{equation}
\begin{aligned}
    f \circ g \big(\Range(\Phi)\big) 
    &= f \circ g
    \left( \Range\left( 
    \begin{bmatrix}
        I_r \\
        \row_{r+1:s}(\Phi) \\
        \row_{s+1:n}(\Phi)
    \end{bmatrix}
    \right) \right) \\
    &= f \left( \Range\left( 
    \begin{bmatrix}
        I_r \\
        \row_{s+1:n}(\Phi)
    \end{bmatrix}
    \right) \right) \\
    &= \Range \left( \bar{f} \left( 
    \begin{bmatrix}
        I_r \\
        \row_{s+1:n}(\Phi)
    \end{bmatrix}
    \right) \right) \\
    &= \Range \left( 
    \begin{bmatrix}
        I_r \\
        \row_{r+1:s}(M)\row_{1:r}(M)^{-1} \\
        \row_{s+1:n}(\Phi)
    \end{bmatrix}
    \right) \\
    &= \Range(\Phi).
\end{aligned}
\end{equation}
This proves that $f \circ g$ is the identity on $\calV_{FQS, E_{n,s}}$.

To prove that $g \circ f$ is the identity on $\calW_{n-s+r}$, we recall that $f(\calW_{n-s+r,r}) \subset \calV_{FQS, E_{n,s}}$.
Combining this with the result that $f \circ g$ is the identity on $\calV_{FQS, E_{n,s}}$ yields
\begin{equation}
    f \circ g \circ f = f.
\end{equation}
Since $f$ is injective, it follows that $g\circ f$ is the identity on $\calW_{n-s+r,r}$, proving that $\calV_{FQS, E_{n,s}}$ is diffeomorphic to $\calW_{n-s+r,r}$, and hence to $\R^{(n-s)\times r}$.

To estimate the diameter of $\calV_{F,X}$ in the Grassmannian $\calG_{n,r}$, it suffices to study $\calV_{FQS, E_{n,s}}$ since $\calV_{F,X}$ is related to $\calV_{FQS, E_{n,s}}$ by a unitary transformation.
The diffeomorphism $h=f\circ \phi_{n-s+r,r}:\R^{(n-s)\times r} \to \calV_{FQS, E_{n,s}}$ constructed above can be written explicitly as
\begin{equation}
    h: A \mapsto
    \Range\left(
    \begin{bmatrix}
        I_r \\
        \row_{r+1:s}(M)\row_{1:r}(M)^{-1} \\
        A
    \end{bmatrix}
    \right).
\end{equation}
Letting the columns of $U\in \R^{s\times r}$ be an orthonormal basis for $$\Range\left(
    \begin{bmatrix}
        I_r \\
        \row_{r+1:s}(M)\row_{1:r}(M)^{-1}
    \end{bmatrix}
\right),$$
a change of coordinates on $\R^{(n-s)\times r}$ yields a new diffeomorphism $\tilde{h}=:\R^{(n-s)\times r} \to \calV_{FQS, E_{n,s}}$ given by
\begin{equation}
    \tilde{h}: A \mapsto
    \Range\left(
    \begin{bmatrix}
        U \\
        A
    \end{bmatrix}
    \right).
\end{equation}
Let $p = \min\{ n-s, r \}$ and let $U_0$ denote the first $p$ columns of $U$ and let $U_1$ denote the remaining $r-p$ columns of $U$.
For each $\varepsilon \in (0,1]$, we define a subspace
\begin{equation}
    V_{\varepsilon} = 
    \tilde{h}\left(\begin{bmatrix}
        \frac{\sqrt{1-\varepsilon^2}}{\varepsilon} I_p & 0 \\
        0 & 0
    \end{bmatrix}\right)
    = \Range\left(
    \begin{bmatrix}
        \varepsilon U_0 & U_1 \\
        \sqrt{1-\varepsilon^2} I_p & 0_{p\times(r-p)}\\
        0_{(n-s-p)\times p} & 0_{(n-s-p)\times (r-p)}
    \end{bmatrix}
    \right),
\end{equation}
where we have indicated dimensions of zero entries for clarity.
We observe that the matrix on the right, which we call $\Phi_{\varepsilon}$, has orthonormal columns forming a basis for $V_{\varepsilon}$.
Since we have
\begin{equation}
    \Phi_1^T \Phi_{\varepsilon}
    = \begin{bmatrix}
        \varepsilon I_{p} & 0 \\
        0 & I_{r-p}
    \end{bmatrix},
\end{equation}
the principal angles $\theta_i(V_1, V_{\varepsilon})$ between these subspaces (see Bj\"{o}rck and Golub \cite{Bjorck1973numerical}) satisfy
\begin{equation}
    \cos\big(\theta_i(V_1, V_{\varepsilon}) \big)
    = \left\{
    \begin{matrix}
        1, & 1 \leq i \leq r-p \\
        \varepsilon, & r-p+1 \leq i \leq r.
    \end{matrix}
    \right.
\end{equation}
Therefore, the geodesic distance between the subspaces (see Bendokat et al. \cite{Bendokat2020grassmann}) approaches
\begin{equation}
    d(V_1, V_{\varepsilon}) = \sqrt{\sum_{i=1}^r \theta_i(V_1, V_{\varepsilon})^2 }
    \ \to \ \frac{\pi}{2} \sqrt{p}
\end{equation}
as $\varepsilon \to 0$.
This completes the proof of the theorem.

\section{Proof of Theorem~\ref{thm:gradient_lifting}}
\label{app:gradient_lifting}

\begin{proof}[Proof of \cref{lem:feature_map_differentiability}]
In the particular case when $K\in C^{2}(\bar{\calX}\times \bar{\calX})$, it is shown in step~2 of Theorem~1 in Zhou \cite{Zhou2008derivative} ($\alpha = 0$ case) that the difference quotients converge
\begin{equation}
    \frac{1}{t} \left( K_{x+t e_j} - K_{x} \right) \to (\partial^{e_j} K)_x 
    \quad \mbox{in $\calH$}\quad \mbox{as $t\to 0$}
    \label{eqn:convergence_of_kernel_difference_quotients}
\end{equation}
for every $x\in \calX$.
Using these partial derivatives, we establish that $\Phi_K$ is Fr\'{e}chet differentiable by employing Proposition~1.1.4 in Kesavan \cite{Kesavan2022nonlinear}.
To apply this result, it remains to show that each of the maps $x\mapsto (\partial^{e_j} K)_x$ are continuous.
By \cref{eqn:derivative_reproducing_property} (see Theorem~1 in \cite{Zhou2008derivative}) and symmetry of the reproducing kernel, we have
\begin{equation}
    \left\langle (\partial^{e_i} K)_x, \ (\partial^{e_j} K)_y \right\rangle 
    = \partial^{e_i} (\partial^{e_j} K)_y(x)
    = \partial^{(e_j, e_i)}K(y,x)
    = \partial^{(e_i, e_j)}K(x,y),
    \label{eqn:derivative_kernel_inner_product}
\end{equation}
which is a continuous function of $x,y\in\calX$.
Therefore,
\begin{equation}
    \left\Vert (\partial^{e_j} K)_x - (\partial^{e_j} K)_y \right\Vert^2
    = \partial^{(e_j, e_j)}K(x,x) - 2 \partial^{(e_j, e_j)}K(x,y) + \partial^{(e_j, e_j)}K(y,y),
\end{equation}
which approaches $0$ as $y\to x$ in $\calX$.
This proves that $x\mapsto (\partial^{e_j} K)_x$ is continuous on $\calX$ and so $\Phi_K$ is Fr\'{e}chet differentiable on $\calX$.

Since the feature map is Fr\'{e}chet differentiable, the first expression in \cref{eqn:derivative_of_feature_map} follows from \cref{eqn:convergence_of_kernel_difference_quotients} and Proposition~1.1.3 in \cite{Kesavan2022nonlinear}.
Given any $v\in\R^n$ and $f\in\calH$, \cref{eqn:derivative_reproducing_property} can be used to show that
\begin{equation}
    \left\langle \D\Phi_K(x) v, \ f \right\rangle
    = \sum_{j=1}^n v_j \left\langle \partial^{e_j} K)_x ,\ f \right\rangle
    = \sum_{j=1}^n v_j \partial^{e_j} f(x)
    = \left\langle v, \ \grad f(x) \right\rangle,
\end{equation}
which proves the second first expression in \cref{eqn:derivative_of_feature_map}.

Using \cref{eqn:derivative_of_feature_map} and \cref{eqn:derivative_kernel_inner_product}, we observe that 
\begin{equation}
    H(x,y) := \D\Phi_K(x)^* \D\Phi_K(y) =
    \begin{bmatrix}
        \partial^{(e_1, e_1)}K(x,y) & \cdots & \partial^{(e_1, e_n)}K(x,y) \\
        \vdots & \ddots & \vdots \\
        \partial^{(e_n, e_1)}K(x,y) & \cdots & \partial^{(e_n, e_n)}K(x,y)
    \end{bmatrix}
\end{equation}
is a continuous matrix-valued function of $x,y\in\calX$.
Therefore,
\begin{equation}
\begin{aligned}
    \left\Vert \D\Phi_K(x) - \D\Phi_K(y) \right\Vert_{\text{op}}^2
    &= \sup_{\substack{v\in\R^n : \\ \Vert v \Vert \leq 1}} \left\Vert \D\Phi_K(x)v - \D\Phi_K(y)v \right\Vert^2 \\
    &= \sup_{\substack{v\in\R^n : \\ \Vert v \Vert \leq 1}} v^T \big( H(x,x) - H(x,y) - H(y,x) + H(y,y) \big) v \\
    &\leq \left\Vert H(x,x) - H(x,y) - H(y,x) + H(y,y) \right\Vert_{\text{op}}
\end{aligned}
\end{equation}
which approaches $0$ as $y\to x$ in $\calX$, proving that $x\mapsto \D\Phi_K(x)$ is continuous with respect to the operator norm.
\end{proof}

\begin{proof}[Proof of \cref{thm:gradient_lifting}]
Since the kernel is continuous on $\bar{\calX}$, so is the feature map.
Because the feature map is continuous and injective on the compact set $\bar{\calX}$, Corollary~13.27 in Sutherland \cite{Sutherland2009introduction} shows that $\Phi_K$ is a topological embedding of $\bar{\calX}$ into $\calH$.
Since the restriction of a topological embedding is still an embedding, $\Phi_K$ is a topological embedding of $\calX$.
Let $\calX'$ denote the image of $\calX$ under $\Phi_K$ and let $\Psi_K: \calX' \to \calX$ denote the continuous inverse of $\left.\Phi_K\right\vert_{\calX}$ on its image.
To show that $\Phi_K$ is a $C^1$ diffeomorphism of $\calX$ onto $\calX'$, it remains to show that $\Psi_K$ is $C^1$.

Choose $x\in\calX$.
The expression \cref{eqn:kernel_derivative_Gram_matrix} for the derivative Gram matrix $G(x) = \D\Phi_K(x)^*\D\Phi_K(x)$ in terms of the kernel is verified directly using \cref{eqn:derivative_of_feature_map} and symmetry of the kernel.
Since $G(x)$ is positive-definite, $\D\Phi_K(x)$ is injective.
Moreover, the subspace $\Range\big( \D \Phi_K (x) \big)$ is finite-dimensional, hence it is closed in $\calH$.
Consequently, Proposition~3.2.8 in Margalef-Roig and Dominguez \cite{Margalef1992differential} shows that $\Phi_K$ is an immersion at $x$.
By Proposition~3.2.13 in \cite{Margalef1992differential} there is an open neighborhood $\calU$ of $x$ in $\calX$ so that $\Phi_K(\calU)$ is a $C^1$ submanifold of $\calH$ and $\Phi_K: \calU \to \Phi_K(\calU)$ is a $C^1$ diffeomorphism.
Because $\Phi_K$ is a topological embedding, $\Phi_K(\calU)$ is an open subset of $\calX'$ and the inverse of $\left.\Phi_K\right\vert_{\calU}$ on $\Phi_K(\calU)$ agrees with $\Psi_K$.
Therefore, $\Psi_K$ is $C^1$ on each $\Phi_K(\calU)$.
Since these sets cover $\calX'$, we have shown that $\Psi_K$ is $C^1$, and so $\Psi_K:\calX \to \calX'$ is a $C^1$ diffeomorphism.

Consequently, $F' = F \circ \Psi_K$ is well-defined and Fr\'{e}chet differentiable on $\calX'$.
Let $\phi$ be any real-valued differentiable function on $\R^m$.
Differentiating the relation $\phi \circ F = \phi \circ  F' \circ \Phi_K$ at $x\in\calX$ gives
\begin{equation}
    \grad ( \phi \circ F )(x) = \D \Phi_K(x)^* \grad ( \phi \circ F' )(K_x).
\end{equation}
Since $\grad ( \phi \circ F' )(K_x)$ is an element of $T_{K_x} \calX' = \Range\big( \D \Phi_K(x) \big)$, there is a vector $w\in\R^n$ so that
\begin{equation}
    \grad ( \phi \circ F' )(K_x) = \D \Phi_K(x) w.
\end{equation}
Since the derivative Gram matrix $G(x)$ is invertible, the above expressions can be solved for $w$, yielding \cref{eqn:lifted_gradient}.
\end{proof}

\section{Kernel-based model parameters}
\label{app:kernel_params}

\begin{table}[h!]
    \centering
    \caption{Kernel ridge regression parameters used to fit the dynamics and reconstruction maps for the jet flow in the nonlinear feature spaces identified by K-CoBRAS and KPCA.}
    \begin{tabular}{|c|c|c|c|c|}
        \hline
        coordinates & dimension $r$ & mapping & KRR regularization $\alpha$  & RBF width $\gamma$ \\
        \hline
        K-CoBRAS & $15$ & $\tilde{g}$ & $0.1000$ & $0.1000$ \\
        K-CoBRAS & $15$ & $\tilde{f}$ & $7.943\times 10^{-6}$ & $5.012\times 10^{-4}$ \\
        \hline
        K-CoBRAS & $30$ & $\tilde{g}$ & $0.005012$ & $0.01585$ \\
        K-CoBRAS & $30$ & $\tilde{f}$ & $3.981\times 10^{-5}$ & $0.001$ \\
        \hline
        KPCA & $15$ & $\tilde{g}$ & $0.02512$ & $0.03981$ \\
        KPCA & $15$ & $\tilde{f}$ & $1.585\times 10^{-4}$ & $2.512\times 10^{-3}$ \\
        \hline
        KPCA & $30$ & $\tilde{g}$ & $0.01259$ & $0.01585$ \\
        KPCA & $30$ & $\tilde{f}$ & $6.310\times 10^{-5}$ & $5.012\times 10^{-4}$ \\
        \hline
    \end{tabular}
    \label{tab:KRR_params}
\end{table}

\end{document}